\makeatletter
\def\input@path{{preamble/}}
\makeatother
\documentclass[conference]{IEEEtran}
\IEEEoverridecommandlockouts

\RequirePackage{silence}
	\WarningFilter{latex}{No \author}
	\WarningFilter{latex}{Marginpar on page}
	\WarningFilter{caption}{}
	\WarningFilter{etex}{}
	\WarningFilter{l3regex}{}

\usepackage[utf8]{inputenc} 
\usepackage[T1]{fontenc}    
\usepackage{microtype}      
\usepackage{booktabs,chemformula}
\usepackage{amsmath,amssymb,amsfonts}
\usepackage{amsthm}
\usepackage{graphicx}
\usepackage{colortbl}
\usepackage{hyperref}       
\usepackage{pifont}
\usepackage{cleveref}
\hypersetup{
    colorlinks,
    linkcolor={red!50!black},
    citecolor={blue!50!black},
    urlcolor={blue!80!black}
}

\usepackage{cite}
\usepackage{xcolor}
\usepackage{subcaption} 
\usepackage{url}            
\usepackage{nicefrac}       
\usepackage{xspace}
\usepackage{xifthen}
\usepackage[probability,operators]{cryptocode}
\usepackage{thmtools,thm-restate}	
\usepackage{tabularx}

\newcommand{\objectspace}{\ensuremath{\mathbb{O}}\xspace}
\newcommand{\secretspace}{\ensuremath{\mathbb{S}}\xspace}

\newcommand{\channel}{\ensuremath{C}\xspace}
\newcommand{\Ch}{\channel}

\newcommand{\Guesses}{\mathcal{W}}

\newcommand{\distset}[1]{\mathcal{D}(#1)}

\newcommand{\lnorm}[1]{\ensuremath{L_{{#1}}}\xspace}  
\newcommand{\loss}{\ensuremath{\ell}\xspace}

\newcommand{\Diam}[1]{\mathrm{diam}(#1)}
\newcommand{\MaxDist}[2]{d_{\max}(#1,#2)}
\newcommand{\ConvHull}[1]{\mathrm{ch}(#1)}
\newcommand{\Reals}{\mathbb{R}}
\newcommand{\Wide}[1]{~#1~}
\newcommand{\Ind}[1]{\mathbf{1}_{#1}}
\newcommand{\feasible}{\mathcal{F}}

\newcommand{\meleakage}{\ensuremath{\mathcal{L}}\xspace}

\newcommand{\mulcapacity}{\ensuremath{\mathcal{M\!L}}\xspace}

\newcommand{\Exp}{\textrm{E}}
\newcommand{\GEx}{{g_{\mathrm{x}}}}
\newcommand{\LEx}{{\loss_{\mathrm{x}}}}
\newcommand{\GS}[1][s]{{g_{#1}}}
\newcommand{\LS}[1][s]{{\loss_{#1}}}
\newcommand{\GP}{{g_P}}
\newcommand{\LP}{{\loss_P}}
\newcommand{\LorG}{{h}}
\newcommand{\Uni}{{\mathbf{u}}}
\newcommand{\WtoS}{Q^\pi}
\newcommand{\OptQ}[1][]{%
  \mathchoice
    {q_{#1}^\star}
    {q_{#1}^\star}
    {q_{#1}^{\kern-.6pt\raisebox{-1.4pt}{$\scriptstyle\star$}}}
    {q_{#1}^{\kern-.6pt\raisebox{-1.4pt}{$\scriptscriptstyle\star$}}}
}






\newtheorem{lemma}{Lemma}

\newtheorem{theorem}{Theorem}
\newtheorem{proposition}{Proposition}

\theoremstyle{definition}

\newcommand\eatspace[1]{}
\def\sectionautorefname{\S\kern.03em\eatspace}  
\def\subsectionautorefname{\S\kern.03em\eatspace}

\pdfstringdefDisableCommands{%
  \def\beta{beta}%
}

\usepackage{cite}
\usepackage{amsmath,amssymb,amsfonts}
\usepackage{algorithmic}
\usepackage{graphicx}
\usepackage{textcomp}
\usepackage{xcolor}

\def\BibTeX{{\rm B\kern-.05em{\sc i\kern-.025em b}\kern-.08em
    T\kern-.1667em\lower.7ex\hbox{E}\kern-.125emX}}

\begin{document}

\title{Self-Defense: Optimal QIF Solutions and Application to Website Fingerprinting
\thanks{
The work of Andreas Athanasiou and Konstantinos Chatzikokolakis was supported by the project CRYPTECS, funded by the ANR (project number ANR-20-CYAL-0006) and by the BMBF (project number 16KIS1439).
The work of Catuscia Palamidessi was supported by the project HYPATIA, funded by the ERC (grant agreement number 835294). 
}
}

\author{
\IEEEauthorblockN{Andreas Athanasiou}
\IEEEauthorblockA{\textit{INRIA, École Polytechnique} \\
Palaiseau, France \\
andreas.athanasiou@inria.fr}
\and
\IEEEauthorblockN{Konstantinos Chatzikokolakis}
\IEEEauthorblockA{
\textit{National and Kapodistrian University of Athens}\\
Athens, Greece \\
kostasc@di.uoa.gr}
\and
\IEEEauthorblockN{Catuscia Palamidessi}
\IEEEauthorblockA{\textit{INRIA, École Polytechnique}    \\
Palaiseau, France \\
catuscia@lix.polytechnique.fr}
}

\maketitle

\begin{abstract}

Quantitative Information Flow (QIF) provides a robust information-theoretical framework for designing secure systems with minimal information leakage. While previous research has addressed the design of such systems under hard constraints (e.g. application limitations) and soft constraints (e.g. utility), scenarios often arise where the core system's behavior is considered fixed. In such cases, the challenge is to design a new component for the existing system that minimizes leakage without altering the original system. 

In this work we address this problem by proposing optimal solutions for constructing a new row, in a known and unmodifiable information-theoretic channel, aiming at minimizing the leakage. We first model two types of adversaries: an \emph{exact-guessing} adversary, aiming to guess the secret in one try, and a \emph{s-distinguishing} one, which tries to distinguish the secret $s$ from all the other secrets.
Then, we discuss design strategies for both fixed and unknown priors by offering, for each adversary, an optimal solution under linear constraints, using Linear Programming.

We apply our approach to the problem of website fingerprinting defense, considering a scenario where a site administrator can modify their own site but not others. We experimentally evaluate our proposed solutions against other natural approaches. First, we sample real-world news websites and then, for both adversaries, we demonstrate that the proposed solutions are effective in achieving the least leakage. 
Finally, we simulate an actual attack by training an ML classifier for the s-distinguishing adversary and show that our approach decreases the accuracy of the attacker.

\end{abstract}

\begin{IEEEkeywords}
Quantitative Information Flow, Website Fingerprinting, Leakage, Smallest Enclosing Ball
\end{IEEEkeywords}

\section{Introduction}

A fundamental link in the chain of secure system designing is the quantification of the protection its users enjoy, or conversely, determining how much information is \emph{leaked} to an adversary. A system is traditionally described by a channel $C$, which is a probability distribution over its observable outputs. The adversary is assumed to have some \emph{prior} knowledge before the system's execution, and the goal is to determine how much information the adversary has afterwards, due to the system's leakage during execution. 

Quantitative Information Flow (QIF) \cite{qif_book} studies measures of information \emph{leakage}. The main idea is simple: it quantifies a system's leakage by \emph{comparing} the secret's \emph{vulnerability} or \emph{risk} before (prior) and after (posterior) executing the system. 

When designing a system from scratch, one can use QIF to create a channel that minimizes leakage.
However here we consider a different scenario where a system and its channel are already given, and we just want to add a new component. We assume that we can design this new component in any way we want (or perhaps under some constraints, which we discuss later on), but we cannot modify anything in the existing system.

For example consider the following problem: the administrator of a website $s$ wants to
prevent Website Fingerprinting (WF); that is, prevent an adversary from inferring
which site a user visits, over an encrypted connection, from observable
information such as the page size. To achieve this goal, the administrator
wants to make his website similar to a set $\secretspace\setminus\{s\}$ of other websites.
The behavior of those websites is completely
known: for each $t\in\secretspace\setminus\{s\}$, the distribution $\Ch_t$ of
observations produced by a visitor of $t$ is given.
Note that $t$ may or
may not be using obfuscation to prevent fingerprinting; this is reflected
in the distribution $\Ch_t$. The administrator of $s$ cannot possibly
modify the \emph{other} websites, but he has full control over \emph{his own}. 
For instance, he can pad
his site to make the page size match any other known
page. Moreover,
this can be done probabilistically, by constructing an arbitrary distribution $q$
of observations to be produced by the visitors of $s$.

We call this scenario \emph{self-defense}, since the administrator of the website $s$
tries to protect his site using his own means. 
The fundamental question then is:
\emph{which distribution $q$ provides the best self-defense}?

Note that, in this work, we assume that the administrator cannot rely on the cooperation of any other websites. This is a crucial assumption, that significantly affects the optimal solution.
Indeed, it is well-known in game theory that cooperative and non-cooperative games have very different outcomes. For instance,  employing a common defense strategy can benefit all parties involved, while if only one site employs that particular strategy and the others do not, that site may become more vulnerable. Take, for instance, padding: if all administrators pad their website to the same value $v$, they will generate the same observation. However, if only one website implements the padding to $v$ while the others do not, the attacker will single out that website more easily.

To state the problem formally in the language of QIF, consider
a system described by the channel $\Ch : \secretspace\to \distset\objectspace$.
All rows $\Ch_t, t\neq s$ are fixed,
but we can freely choose the row $\Ch_s$.
Concretely, we want to construct a new
channel $\Ch^q : \secretspace\to \distset \objectspace$, such that $\Ch^q_{t} = \Ch_{t}$ for all $t \neq s$, while the
row $\Ch^q_s$ can be set to a new distribution 
$ q \in \distset\objectspace$. The question now becomes:
\emph{how to choose $q$ to minimize the leakage of $C$}?

Of course this problem is not specific to website fingerprinting; it arises in any
QIF scenario in which
each secret corresponds to a different \emph{user} or \emph{entity},
which has full control over his own part of the system, but no control over the remaining system. 
Note that the knowledge of $\Ch_t$ is vital for choosing $q$; we want to make
our secret as similar as possible to the other ones. 

One may think of choosing $q$ so to replicate a specific site $t'$, or to emulate the ``average'' website. 
These naive approaches, however,  fall short of providing an optimal solution. In this work, we tackle this problem by using QIF, rather than resorting to an ad-hoc solution.
In particular, we will use the version of QIF known as $g$-leakage~\cite{alvim2012measuring}, which allows us to express the capabilities and goals of a large class of adversaries. 
We consider two kinds of adversaries: one that tries to exactly guess the secret in one try (\emph{exact-guessing}) and one that tries to distinguish the secret $s$ from all the other secrets (\emph{s-distinguishing}). 

Until now we have assumed complete freedom in designing the new row. In some situations, however, such freedom could be unrealistic: there may be technical constraints, which of course will increase the complexity of the problem. For instance, in the WF example, one can easily increase the page size by padding but cannot reduce it (at least, not to an arbitrary value, as infinite compression is not possible). Therefore, we also consider the problem of finding the optimal solution in the presence of constraints on the new row $\Ch_s$.

We summarize our contributions as follows:

\begin{itemize}
    \item For a fixed prior, we show how to optimally create $q$ for both the exact-guessing and $s$-distinguishing adversary (\Cref{sec:fixed-prior}).
    \item When the prior is unknown, we first discuss how to optimally create $q$ for the exact-guessing adversary, and then explore the $s$-distinguishing adversary (\Cref{sec:unknown-prior}). We demonstrate that the latter reduces to the problem of the Smallest Enclosing Ball (\Cref{sec:seb}), offering an optimal solution and discussing computationally lighter alternatives.
    \item In all cases we demonstrate how to incorporate the constraints in the proposed solutions using Linear Programming (LP). 
    \item Finally, we evaluate our proposed solutions by simulating a website fingerprinting attack and compare them with other conventional approaches for both adversaries (\Cref{sec:exp}). The experiments confirm that our approaches offer indeed the minimum leakage and decrease the accuracy of the attack. 
\end{itemize}

\subsection{Related Work}

The works by Simon et al. \cite{simon2022minimizing} and Reed et al. \cite{reed2021optimally} were motivated by a similar traffic analysis attack; they explored how a server can efficiently pad its files in order to minimize leakage. 
In their setting, a user selects a file from the server which is padded according to some probabilistic mechanism.
Meanwhile, an adversary, observing the encrypted network, attempts to guess the selected file. Our case differs as we strive to make a website indistinguishable from others (which are considered fixed), rather than making the pages of the website indistinguishable among themselves.

On the other hand, designing a channel to minimize leakage by being able to manipulate all it's rows (instead of only one, as in our case) under some constraints, has been studied in the literature \cite{KhouzaniM17, e20090675}.
In fact, by properly selecting these constraints one can fix the value of all rows but one, essentially mimicking our scenario. Using the techniques from
\cite{KhouzaniM17} 
under such constraints, one can arrive at optimization programs similar to those of \autoref{optimazation_prop2} and \ref{prop:unknown-exact-lp}. However, studying in depth the specific problem
of single row optimization is interesting since,
first, we can give direct solutions such as
\autoref{prop:unknown-exact-optimal} and
second, we can give capacity solutions for an
s-distinguishing adversary, which is outside
the scope of \cite{KhouzaniM17} and a main contribution of our work.

P. Malacaria et al. \cite{malacaria_deterministic} discussed the complexity of creating a deterministic channel aimed at minimizing leakage for a given leakage metric and a specific prior while satisfying a set of constraints. They first showed that this is an NP-hard problem and then they proposed a solution for a particular class of constraints, by introducing a greedy algorithm. The authors show that this algorithm offers optimal leakage across most entropy measures used in the literature.

Moreover, one could opt to explore the problem within the popular  framework of Differential Privacy (DP) \cite{dppaper1}. DP aims to protect the privacy of an individual within a statistical database when queried by an analyst (considered the adversary) seeking aggregated information. A trusted central  server aggregates users' data and introduces noise before publishing the noisy result to the analyst, making it hard for her to distinguish an individual's value. 
The (central) DP solution however is not suitable in our case, because there is no central server, and the attacker can observe directly the obfuscated version of the secret (the output of the channel). 

In contrast, Local Differential Privacy (LDP) \cite{whattolearn_privately} removes the need for a central entity. Instead, users autonomously inject noise into their data. Notably, all users employ the same probabilistic mechanism to perturb their data without knowledge of other users' values and perturbations. 
LDP is more suitable for the situation we are considering. However, our scenario involves 
only one secret having the ability to be modified, and assumes knowledge of how the other secrets are treated by the system.
Furthermore, LDP (as well as DP) is 
a worst-case metric that, for every observable, requires a certain level of indistinguishability between our secret $s$ (corresponding to the row that we want to modify) and every other secret (row) $t$. This may be impossible to 
satisfy, as it would require that all the 
rows are almost indistinguishable as well, but
as we already explained, the other rows are already given and cannot be modified.

Another metric proposed to measure the vulnerability of a secret is the multiplicative  Bayes risk leakage $\beta$
\cite{Cherubin17POPETS}, which was inspired by the cryptographic notion of \emph{advantage}. $\beta$
corresponds to the (multiplicative) $g$-leakage in the case of a  Bayesian attacker. 
In \cite{bayes_security}, the authors showed that $\beta$ can also be used to quantify the risk for the two most vulnerable secrets. 

The related problem of \emph{location privacy} has also been studied in the QIF literature using game theory and LP approaches. For instance, \cite{shokri2012protecting} discussed the optimal user strategy when a desired Quality of Service (QoS) and a target level of privacy are given, taking into account the prior knowledge of the adversary. Moreover, \cite{shokri2015privacy}, motivated by DP and distortion-privacy (i.e. inference error), provided a utility-maximizing obfuscation mechanism with formal privacy guarantees, aiming to solve the trade-off between QoS and privacy. Their mechanism is based on formulating a Stackelberg game and solving it via LP, using the DP guarantee as a constraint. They claim their approach is utility-wise superior to DP while offering the same privacy guarantees. Furthermore, \cite{bordenabe2014optimal} studied how to maximize the QoS while achieving a certain level of geo-indistinguishability using LP, and discussed methods to reduce the constraints from cubic to quadratic, significantly reducing the required computation time.
Note that an extension of this problem, which aims to protect entire location trajectories rather than individual locations (known as \emph{trajectory privacy}), has also been studied using similar techniques \cite{lbs_trajectory}.
\section{Preliminaries} \label{prelim}

The basic building blocks of QIF are summarized in the following sections.

\subsubsection*{Prior vulnerability and risk}

A natural framework for expressing vulnerability is in terms of \emph{gain}
functions.
Let $\secretspace$ be the set of all possible secrets (e.g. all
valid passwords); the adversary does not fully know the secret, but he possesses some
\emph{probabilistic knowledge} about it, expressed as a \emph{prior
distribution} $\pi:\distset\secretspace$. In order to exploit this knowledge,
the adversary will perform some \emph{action} (e.g. make a guess about the
password in order to access the user's system); the set of all available
actions is denoted by $\Guesses$. Often we take $\Guesses = \secretspace$,
meaning that the adversary is trying to guess the exact secret; however
expressing certain adversaries might require a different choice of
$\Guesses$ (see \autoref{sec:guessing-predicates} for an example).

A gain
function $g(w,s)$ expresses the gain
obtained by performing the action $w\in\Guesses$ when the secret is $s\in\secretspace$.
The \emph{expected gain} (wrt $\pi$) of action $w$ is $\sum_s \pi_s g(w,s)$.
Being rational, the adversary will choose the action that maximizes his expected gain;
we naturally define
\emph{$g$-vulnerability} $V_g(\pi)$ as the expected gain
of the \emph{optimal} action:
\begin{align*}
	V_g(\pi)	&\Wide= \textstyle\max_w \sum_s \pi_s g(w,s)
	~.
\end{align*}

Alternatively, it is often convenient
to measure the adversary's \emph{failure} (instead of his success).
This can be done in a dual manner by expressing the \emph{loss} $\ell(w,s)$ occurred by the action $w$.
The adversary tries to minimize his loss, hence \emph{$\loss$-risk}
is defined as the expected loss of the optimal $w$.
\begin{align*}
	R_\loss(\pi)	&\Wide= \textstyle\min_w \sum_s \pi_s \loss(w,s)
	~.
\end{align*}
Note that $V_g(\pi)$ is a \emph{vulnerability} function,
expressing the adversary's
\emph{success} in achieving his goal,
while $R_\loss(\pi)$ is a \emph{risk} function (also known as
\emph{uncertainty} or \emph{entropy}), expressing the adversary's failure.

\subsubsection*{Posterior Vulnerability and Leakage}

So far we modeled the adversary's success or failure \emph{a-priori}, that is given
only some prior information $\pi$, before executing the system of interest.
Given an output $o$ of a system $\Ch:\secretspace\to\objectspace$, the adversary
applies Bayes law to convert his prior $\pi$ into a \emph{posterior} knowledge
$\delta$, which is exactly what causes information leakage.
Hence, $V_g(\delta)$ expresses the adversary's success \emph{after observing $o$}
(and similarly for $R_\ell(\delta)$). Since outputs are selected randomly, and each
produces a different posterior, we can intuitively define the
\emph{posterior} $g$-vulnerability $V_g(\pi,\Ch)$ and the posterior $\ell$-risk
$R_\ell(\pi,\Ch)$,\footnote{$R_\ell(\pi,\Ch)$ in
often called \emph{Bayes risk} (wrt $\ell$).} as the expected value
of $V_g(\delta)$ and $R_\ell(\delta)$ respectively:
\begin{align*}
	V_g(\pi,\Ch)
		&\Wide= \Exp [V_g(\delta)]
		\textstyle
		\Wide= \sum_o \max_w \sum_s \pi_s \Ch_{s,o} g(w,s)
	~, \\
	R_\loss(\pi,\Ch)
		&\Wide= \Exp [R_\ell(\delta)]
		\textstyle
		\Wide= \sum_o \min_w \sum_s \pi_s \Ch_{s,o} \loss(w,s)
	~.
\end{align*}

An adversary might succeed in his goal -- say, to guess the user's password -- due to two
very distinct reasons: either because the user is choosing weak passwords (that is, the
\emph{prior} vulnerability is high), or because the \emph{system} is leaking information
about the password, causing the \emph{posterior} vulnerability to increase. When
studying the privacy of a system we want to focus on the information leak caused by the system \emph{itself}.
This is captured by the notion of \emph{leakage}, the fundamental quantity in QIF, which simply
compares the vulnerability before and after executing the system.
The \emph{multiplicative} leakage of $\Ch$ wrt $\pi$ and $g/\ell$ is defined as:
\[
	\textstyle
	\meleakage_g(\pi,\Ch) \Wide= \frac{V_g(\pi,\Ch)}{V_g(\pi)}
	~,
	\qquad
	\meleakage_\ell(\pi,\Ch) \Wide= \frac{R_\ell(\pi)}{R_\ell(\pi,\Ch)}
	~.
\]
A leakage of $1$ means \emph{no leakage} at all: the prior and posterior
vulnerability (or risk) are exactly the same. On the other hand, a \emph{high
leakage} means that, after observing the output of the channel, the adversary
is more successful in achieving his goal than he was before. Note that $V_g$
always \emph{increases} as a result of executing the system, while $R_\ell$
always \emph{decreases}, this is why the fractions in the definitions of
$\meleakage_g$ and $\meleakage_\ell$ are reversed.%
\footnote{In \cite{ChatzikokolakisCPT23}, Bayes \emph{security} is defined as
$\beta(\pi,C) = 1/\meleakage_\ell(\pi,\Ch)$; in this paper we use leakage, which is
the standard notion in the QIF literature; all results can be trivially translated to $\beta(\pi,C)$.}
We use $\LorG$ to denote a function that can be either a gain or a loss function,
which is useful to treat both types together

\subsubsection*{Guessing the exact secret}

The simplest and arguably most natural adversary is the one that tries to 
guess the \emph{exact secret} in one try.
We call this adversary \emph{exact-guessing};
he can be easily expressed by
setting $\Guesses = \secretspace$, and defining gain and loss as follows:
\[
	\GEx(w,x) \Wide= \Ind{\{w\}}(x)
	~,
	\quad
	\LEx(w,x) \Wide= 1-\Ind{\{w\}}(x)
	~,
\]
where $\Ind{S}(x)$ is the indicator function (equal to $1$ iff $x\in S$ and $0$ otherwise).
For this adversary, vulnerability and risk reduce to the following expressions:
\begin{align*}
	V_\GEx(\pi) &= \max_s \pi_s
		~,
		&V_\GEx(\pi,\Ch) &= \textstyle\sum_y \max_s \pi_s \Ch_{s,o}
		~,\\
	R_\GEx(\pi) &= 1-V_\GEx(\pi)
		~,
		&R_\LEx(\pi,\Ch) &= \textstyle 1-V_\GEx(\pi,\Ch)
		~.
\end{align*}
$V_\GEx$ is known as \emph{Bayes vulnerability}, while
$R_\LEx$ as \emph{Bayes error} or \emph{Bayes risk}.\footnote{The term
Bayes risk has been used in the literature to describe both $R_\LEx$ and $R_\ell$ (for arbitrary $\ell$).}

\subsubsection*{Capacity}

Finally, the notion of \emph{$g/\ell$-capacity} is simply the worst-case leakage of a system
wrt all priors. Bounding the capacity of a system means that its leakage will be bounded
independently from the prior available to the adversary.
\[
	\textstyle
	\mulcapacity_\LorG(\Ch) \Wide= \max_\pi \meleakage_\LorG(\pi,\Ch)
	~,\quad
	h\in\{g,\loss\}
	~.
\]
Denote by $\Uni$ the uniform prior, and by  $\Uni^{s,t}$ the prior that is
uniform among these two secrets, i.e. it assigns
probability $\nicefrac{1}{2}$ to them.
For an exact-guessing adversary (i.e. for $\GEx,\LEx$),
the following results are known:

\begin{theorem}\label{thm:mbc}
	$\mulcapacity_\GEx(\Ch)$ is equal to:
	\[
		\textstyle
		\mulcapacity_\GEx(\Ch)
		\Wide=
		\meleakage_\GEx(\Uni,\Ch)
		\Wide=
		\sum_o \max_s \Ch{s,o}
		~.
	\]
\end{theorem}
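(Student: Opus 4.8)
The plan is to prove the two equalities in \autoref{thm:mbc} separately, establishing
\[
	\meleakage_\GEx(\Uni,\Ch) \Wide= \textstyle\sum_o \max_s \Ch_{s,o}
\]
first and then showing that the uniform prior is in fact a maximizer of $\meleakage_\GEx(\pi,\Ch)$ over all priors, i.e. $\mulcapacity_\GEx(\Ch) = \meleakage_\GEx(\Uni,\Ch)$.

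\textbf{Step 1: the value at the uniform prior.} Write $n = |\secretspace|$ so that $\Uni_s = \nicefrac{1}{n}$ for every $s$. By the expressions recalled in the preliminaries, $V_\GEx(\Uni) = \max_s \Uni_s = \nicefrac{1}{n}$ and $V_\GEx(\Uni,\Ch) = \sum_o \max_s \Uni_s \Ch_{s,o} = \nicefrac{1}{n} \sum_o \max_s \Ch_{s,o}$, since the common factor $\nicefrac{1}{n}$ pulls out of the inner maximum. Dividing, the factors of $n$ cancel and we get $\meleakage_\GEx(\Uni,\Ch) = V_\GEx(\Uni,\Ch)/V_\GEx(\Uni) = \sum_o \max_s \Ch_{s,o}$, which is the claimed closed form. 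This step is a routine substitution.

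\textbf{Step 2: the uniform prior is optimal.} Here the goal is to show $\meleakage_\GEx(\pi,\Ch) \le \sum_o \max_s \Ch_{s,o}$ for every prior $\pi$. The key observation is the pointwise bound $\max_s \pi_s \Ch_{s,o} \le (\max_s \Ch_{s,o})\cdot(\max_s \pi_s)$: for the index $s^\star$ attaining the left-hand maximum, $\pi_{s^\star}\Ch_{s^\star,o} \le (\max_s \Ch_{s,o})\cdot \pi_{s^\star} \le (\max_s \Ch_{s,o})\cdot(\max_s\pi_s)$. Summing over $o$ gives $V_\GEx(\pi,\Ch) \le (\max_s\pi_s)\sum_o \max_s\Ch_{s,o} = V_\GEx(\pi)\cdot\sum_o\max_s\Ch_{s,o}$, and dividing by $V_\GEx(\pi) > 0$ yields $\meleakage_\GEx(\pi,\Ch) \le \sum_o\max_s\Ch_{s,o}$. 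Combined with Step 1, which exhibits $\Uni$ as a prior achieving this bound, we conclude $\mulcapacity_\GEx(\Ch) = \max_\pi \meleakage_\GEx(\pi,\Ch) = \meleakage_\GEx(\Uni,\Ch) = \sum_o \max_s \Ch_{s,o}$.

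The only mild subtlety — and the step I would be most careful about — is the pointwise inequality in Step 2: one must decouple the two maxima, noting they need not be attained at the same secret, which is precisely why the crude bound $\pi_{s^\star}\Ch_{s^\star,o}\le(\max_s\Ch_{s,o})(\max_s\pi_s)$ is the right move rather than anything tighter. Everything else is bookkeeping. (This is the classical Braun–Chatzikokolakis–Palamidessi / Köpf–Smith result on min-capacity; I would cite \cite{qif_book} for the statement if a reference is wanted.)
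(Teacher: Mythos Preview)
Your proof is correct and is the standard argument for this classical min-capacity result. Note, however, that the paper does not actually prove \autoref{thm:mbc}: it is stated in the preliminaries as a known result (``the following results are known''), so there is no in-paper proof to compare against. Your two-step argument (evaluate at $\Uni$, then bound via $\max_s \pi_s C_{s,o} \le (\max_s C_{s,o})(\max_s \pi_s)$) is exactly the textbook derivation one finds in \cite{qif_book}.
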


In the following, we use the $\lnorm{1}$-metric
$\|q-q'\|_1$ as the distance between probability distributions
$q,q'$; note that this is equal to twice their \emph{total variation} distance,
since we only consider discrete distributions.
Denote by $\MaxDist{S}{T}$ the maximum distance between elements of sets $S,T$,
and by $\Diam{S} = \MaxDist{S}{S}$ the
\emph{diameter} of $S$.
Finally,
given a subset of secrets $P\subseteq\secretspace$,
we denote by
$\Ch_P\subset \distset\objectspace$ the set of rows of $\Ch$
indexed by $P$
(note that rows are probability distributions); the set of
all rows will be $\Ch_\secretspace$ while the $s$-th row
will be denoted by $\Ch_s$.


\begin{theorem}[\cite{ChatzikokolakisCPT23}]\label{thm:risk-mbc}
	$\mulcapacity_\LEx(\Ch)$ is equal to:
	\[
		\mulcapacity_\LEx(\Ch)
		\Wide=
		\meleakage_\LEx(\Uni^{s,t},\Ch)
		\Wide=
		\frac{1}{ 1 - \frac{1}{2}\Diam{\Ch_\secretspace} }
		~.
	\]
	where $s,t\in\secretspace$ realize $\Diam{\Ch_\secretspace}$.
\end{theorem}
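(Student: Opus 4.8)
The plan is to prove the two halves of the claimed equality separately: (i) $\mulcapacity_{\LEx}(\Ch) \ge \meleakage_{\LEx}(\Uni^{s,t},\Ch) = \tfrac{1}{1-\frac12\Diam{\Ch_\secretspace}}$, which is immediate from the definition of capacity as a maximum over priors once we \emph{evaluate} the leakage at the two-point prior; and (ii) $\mulcapacity_{\LEx}(\Ch) \le \tfrac{1}{1-\frac12\Diam{\Ch_\secretspace}}$, i.e.\ the bound holds for \emph{every} prior. The degenerate cases are handled first and are trivial: a singleton-support prior gives leakage $1$, and $\Diam{\Ch_\secretspace}=2$ makes both sides $+\infty$; so below I assume a prior with support of size at least $2$ and $\Diam{\Ch_\secretspace}<2$, which guarantees $R_{\LEx}(\pi,\Ch)>0$ and all denominators positive.

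\emph{Evaluating the two-point prior.} Let $s,t$ realize the diameter, so $\|\Ch_s-\Ch_t\|_1 = \Diam{\Ch_\secretspace}$, and take $\pi = \Uni^{s,t}$. Then $V_{\GEx}(\pi) = \tfrac12$, hence $R_{\LEx}(\pi) = \tfrac12$, while $R_{\LEx}(\pi,\Ch) = 1 - \sum_o \max(\tfrac12\Ch_{s,o},\tfrac12\Ch_{t,o}) = 1 - \tfrac12\sum_o\max(\Ch_{s,o},\Ch_{t,o})$. Applying $\max(a,b)=\tfrac12(a+b+|a-b|)$ termwise and using that $\Ch_s,\Ch_t$ are distributions gives $\sum_o\max(\Ch_{s,o},\Ch_{t,o}) = 1 + \tfrac12\|\Ch_s-\Ch_t\|_1$, so $R_{\LEx}(\pi,\Ch) = \tfrac12\big(1-\tfrac12\|\Ch_s-\Ch_t\|_1\big)$ and $\meleakage_{\LEx}(\pi,\Ch) = \tfrac{1}{1-\frac12\|\Ch_s-\Ch_t\|_1} = \tfrac{1}{1-\frac12\Diam{\Ch_\secretspace}}$, which establishes (i).

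\emph{The upper bound.} Fix an arbitrary $\pi$, let $s^*$ be a MAP secret and $v=\pi_{s^*}=\max_r\pi_r$, so $R_{\LEx}(\pi)=1-v$ and $R_{\LEx}(\pi,\Ch)=1-\sum_o\max_r\pi_r\Ch_{r,o}$. A direct rearrangement shows that $\meleakage_{\LEx}(\pi,\Ch)\le\tfrac1{1-\frac12\Diam{\Ch_\secretspace}}$ is equivalent to
\[
  \sum_o \max_r \pi_r \Ch_{r,o} \;\le\; v + \tfrac12\Diam{\Ch_\secretspace}\,(1-v).
\]
To prove this, pick for each $o$ a maximizing secret $w(o)$, breaking ties in favour of $s^*$, and set $A^c=\{o: w(o)\neq s^*\}$. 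Rewriting the left-hand side as $v\big(1-\Ch_{s^*}(A^c)\big)+\sum_{o\in A^c}\pi_{w(o)}\Ch_{w(o),o}$, it suffices to bound $\sum_{o\in A^c}\big(\pi_{w(o)}\Ch_{w(o),o}-v\,\Ch_{s^*,o}\big)$ by $\tfrac12\Diam{\Ch_\secretspace}(1-v)$. Since $v\ge\pi_{w(o)}$ by the MAP property, each summand is at most $\pi_{w(o)}\big(\Ch_{w(o),o}-\Ch_{s^*,o}\big)$, and on $A^c$ the tie-breaking forces $\pi_{w(o)}\Ch_{w(o),o}>\pi_{s^*}\Ch_{s^*,o}\ge\pi_{w(o)}\Ch_{s^*,o}$, hence $\Ch_{w(o),o}>\Ch_{s^*,o}$. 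Now group $A^c$ by the winning secret: for $t\neq s^*$ put $A_t=\{o\in A^c: w(o)=t\}$, so $A_t\subseteq\{o:\Ch_{t,o}>\Ch_{s^*,o}\}$ and therefore $\sum_{o\in A_t}(\Ch_{t,o}-\Ch_{s^*,o})\le\sum_{o:\Ch_{t,o}>\Ch_{s^*,o}}(\Ch_{t,o}-\Ch_{s^*,o})=\tfrac12\|\Ch_t-\Ch_{s^*}\|_1\le\tfrac12\Diam{\Ch_\secretspace}$. Summing over $t\neq s^*$, $\sum_{o\in A^c}\pi_{w(o)}(\Ch_{w(o),o}-\Ch_{s^*,o})\le\tfrac12\Diam{\Ch_\secretspace}\sum_{t\neq s^*}\pi_t\le\tfrac12\Diam{\Ch_\secretspace}(1-v)$, which is exactly what we needed, so (ii) follows.

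\emph{Main obstacle.} The only delicate step is the upper bound, where the maximizing secret $w(o)$ varies with the observable, so bounding everything against a single fixed competitor of $s^*$ is too lossy. The fix is to split the ``bad'' observables according to their winner and exploit simultaneously: (a) that every winner differs from $s^*$, so the winners' prior masses add up to at most $1-v$; (b) the MAP inequality $\pi_{w(o)}\le v$; and (c) the identity $\sum_{o:\Ch_{t,o}>\Ch_{s^*,o}}(\Ch_{t,o}-\Ch_{s^*,o})=\tfrac12\|\Ch_t-\Ch_{s^*}\|_1$, which is the place where the factor $\tfrac12\Diam{\Ch_\secretspace}$ comes from.
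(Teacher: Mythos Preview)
Your proof is correct. Note, however, that the paper does \emph{not} prove \autoref{thm:risk-mbc} at all: it is stated as a known result imported from \cite{ChatzikokolakisCPT23}, and throughout the paper (e.g.\ in the proofs of \autoref{lem:binary-bayes-capacity}, \autoref{prop:unknown-exact-optimal} and \autoref{thm-minball}) it is simply invoked as a black box. So there is no ``paper's own proof'' to compare against.

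That said, your argument is sound and self-contained. Part~(i) is the standard two-point computation via $\max(a,b)=\tfrac12(a+b+|a-b|)$. For part~(ii), the key idea---partitioning the observables by which secret $w(o)$ wins the $\max$, using the MAP inequality $\pi_{w(o)}\le v$ to replace $v\,\Ch_{s^*,o}$ by $\pi_{w(o)}\,\Ch_{s^*,o}$, and then bounding each block $A_t$ by the positive part of $\Ch_t-\Ch_{s^*}$, i.e.\ by $\tfrac12\|\Ch_t-\Ch_{s^*}\|_1$---is clean and tight. One small presentational remark: the claim that $\Diam{\Ch_\secretspace}<2$ together with $|\mathrm{supp}(\pi)|\ge 2$ forces $R_{\LEx}(\pi,\Ch)>0$ is true (it amounts to saying that perfect posterior identification requires two rows with disjoint support, hence $\lnorm{1}$-distance $2$), but you could alternatively avoid asserting it up front and instead derive it as a by-product of your inequality $\sum_o\max_r\pi_r\Ch_{r,o}\le v+\tfrac12\Diam{\Ch_\secretspace}(1-v)<1$.
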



\section{Guessing predicates and distinguishing a specific secret}\label{sec:guessing-predicates}

Although systems are commonly evaluated against the exact-guessing adversaries
$\GEx,\LEx$,
there are many practical scenarios in which 
the adversary is not necessarily interested in guessing the \emph{exact} secret.
A general class of such adversaries are those aiming at guessing only a \emph{predicate}
$P$
of the secret, for instance \emph{is the user located close to a hospital?} or
\emph{is the sender a male?}. We call this adversary \emph{$P$-guessing}.

A predicate can be described by a subset of secrets $P \subseteq
\secretspace$, those that satisfy the predicate; the rest are denoted by 
$\neg P = \secretspace\setminus P$. A $P$-guessing adversary can be easily captured
using gain/loss functions, by setting $\Guesses = \{P , \neg P \}$ and defining
\[
	\GP(w,s) \Wide= \Ind{w}(s)
	~,
	\quad
	\LP(w,s) \Wide= 1 - \Ind{w}(s)
	~,
\]
for $w \in \Guesses$. Then $\meleakage_{\GP}$ and $\meleakage_{\ell_P}$ measure the system's leakage
wrt an adversary trying to guess $P$.

Although this class of adversaries was discussed in
\cite{alvim2012measuring}, it has received little attention in the QIF literature.
In this section, we first show that $\meleakage_{\GP}$
can be expressed as the Bayes leakage $\meleakage_\GEx$ of a properly constructed pair of prior/channel
(and similarly for $\meleakage_{\LP}$). Then, we study the capacity problem for this adversary.

\subsubsection*{Expressing $\meleakage_{\GP}$ as $\meleakage_{\GEx}$}
Since every secret belongs to either $P$ or $\neg P$,
given a prior $\pi\in\distset\secretspace$ we can construct a \emph{joint distribution}
between secrets $s$ and classes $w$, where $\textrm{Pr}(w,s) = \pi_s \cdot \Ind{w}(s)$
gives the probability to have both $w$ and $s$ together. This joint distribution can be factored
into a \emph{marginal} distribution $\rho^\pi\in\distset\Guesses$,
and \emph{conditional} distributions -- i.e. a \emph{channel} --
$\WtoS:\Guesses\to\secretspace$:%
\footnote{In case $\rho^\pi_w = 0$ we can define the row $\WtoS_w$ arbitrarily.}
\begin{align}
	\rho^\pi_w
		&\Wide=	\textrm{Pr}(w)
		\kern16pt \Wide= \textstyle\sum_{s\in w} \pi_w
	~,\label{eq:rho}\\
	\WtoS_{w,s}
		&\Wide=	\textrm{Pr}(s \ |\ w)
		\Wide=
			\frac{\pi_s \cdot \Ind{w}(s)}{\rho^\pi_w}
		~.
		\label{eq:B}
\end{align}
In the notation we emphasize that both $\rho^\pi,\WtoS$ depend on $\pi$.%
\footnote{They also depend on $P$, which is clear from the context hence omitted for simplicity.}

This construction allows us to express $\GP$-leakage as
$\GEx$-leakage, for
a different pair of prior/channel, as stated in the following lemma.

\begin{lemma}\label{lem:p-vs-bayes}
	Let $\pi\in\secretspace, P \subseteq \secretspace$
	and define $\rho^\pi,\WtoS$ as in \eqref{eq:rho},\eqref{eq:B}.
	Then for all channels $C$ it holds that:
	\begin{align*}
		\meleakage_\GP(\pi, C)
			&\Wide= \meleakage_\GEx(\rho^\pi, \WtoS C)
		~, \\
		\meleakage_\LP(\pi, C)
			&\Wide= \meleakage_\LEx(\rho^\pi, \WtoS C)
		~.
	\end{align*}
\end{lemma}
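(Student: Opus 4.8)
The plan is to unfold both sides of each claimed equality and show that they reduce to the same expression. Recall that $\meleakage_\GP(\pi,C) = V_\GP(\pi,C)/V_\GP(\pi)$ and $\meleakage_\GEx(\rho^\pi,\WtoS C) = V_\GEx(\rho^\pi,\WtoS C)/V_\GEx(\rho^\pi)$, so it suffices to prove the two identities
\begin{align*}
	V_\GP(\pi) &\Wide= V_\GEx(\rho^\pi)
	~, \\
	V_\GP(\pi, C) &\Wide= V_\GEx(\rho^\pi, \WtoS C)
	~,
\end{align*}
and then the multiplicative-loss versions follow, since for an exact/predicate guesser risk and vulnerability are complementary ($R_\LP(\pi) = 1 - V_\GP(\pi)$, and likewise for the posterior quantities and for $\GEx,\LEx$), so the $\meleakage_\LP = \meleakage_\LEx$ equality is immediate from the $V$-level identities once we observe the denominators/numerators swap in the same way on both sides.

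For the prior identity, I would compute $V_\GP(\pi) = \max_{w\in\{P,\neg P\}} \sum_s \pi_s \Ind{w}(s) = \max(\rho^\pi_P, \rho^\pi_{\neg P})$ directly from the definition of $\GP$ and of $\rho^\pi$ in \eqref{eq:rho}. On the other side, $\rho^\pi$ is a distribution on the two-element set $\Guesses = \{P,\neg P\}$, so $V_\GEx(\rho^\pi) = \max_w \rho^\pi_w = \max(\rho^\pi_P, \rho^\pi_{\neg P})$; the two coincide.

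For the posterior identity, I would expand the right-hand side through the matrix product $\WtoS C$. Writing $(\WtoS C)_{w,o} = \sum_s \WtoS_{w,s} C_{s,o}$ and using \eqref{eq:B}, the term $\rho^\pi_w (\WtoS C)_{w,o}$ telescopes: $\rho^\pi_w \sum_s \frac{\pi_s \Ind{w}(s)}{\rho^\pi_w} C_{s,o} = \sum_{s\in w} \pi_s C_{s,o}$ (the case $\rho^\pi_w = 0$ is harmless, since then that $w$ contributes zero to the outer max over $w$ and also $\sum_{s\in w}\pi_s C_{s,o}=0$). Hence
\[
	V_\GEx(\rho^\pi, \WtoS C) = \sum_o \max_{w} \rho^\pi_w (\WtoS C)_{w,o} = \sum_o \max_{w\in\{P,\neg P\}} \sum_{s\in w}\pi_s C_{s,o}~.
\]
On the other hand, from the definition of $V_g(\pi,C)$ instantiated at $g=\GP$ we get $V_\GP(\pi,C) = \sum_o \max_{w\in\{P,\neg P\}} \sum_s \pi_s C_{s,o}\Ind{w}(s) = \sum_o \max_{w}\sum_{s\in w}\pi_s C_{s,o}$, which is exactly the same. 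This completes the argument.

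The computations here are all routine unfoldings, so there is no deep obstacle; the one place that needs a little care is the degenerate case $\rho^\pi_w = 0$, where $\WtoS_w$ was defined arbitrarily (per the footnote after \eqref{eq:B}) — I would note explicitly that such a $w$ never achieves the maximum in either the prior or the posterior expression (its contribution is $0$ while the total probability mass is $1>0$, so the other class has strictly larger value, except in the trivial all-zero situations where both sides are still equal), so the arbitrary choice does not affect either side. The other mild subtlety is making sure the "vulnerability version implies the risk version" step is stated cleanly: since $\Guesses=\{P,\neg P\}$ partitions $\secretspace$, the predicate guesser's loss satisfies $R_\LP(\pi)=1-V_\GP(\pi)$ exactly as for the exact guesser, and the same holds for the posterior quantities, so $\meleakage_\LP(\pi,C) = \frac{1-V_\GP(\pi)}{1-V_\GP(\pi,C)} = \frac{1-V_\GEx(\rho^\pi)}{1-V_\GEx(\rho^\pi,\WtoS C)} = \meleakage_\LEx(\rho^\pi,\WtoS C)$.
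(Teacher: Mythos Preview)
Your proof is correct. The paper actually states \autoref{lem:p-vs-bayes} without proof (it is not among the results proved in the appendix), presumably because the authors consider it a direct computation; your unfolding of both sides via the identities $V_\GP(\pi)=V_\GEx(\rho^\pi)$ and $V_\GP(\pi,C)=V_\GEx(\rho^\pi,\WtoS C)$, together with the observation $R_\LP = 1 - V_\GP$, is exactly the natural way to fill in that gap and matches how the lemma is used downstream in the proof of \autoref{thm:p-guessing-capacity}.
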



\subsubsection*{Capacity}
\autoref{lem:p-vs-bayes} can be used to obtain capacity results for any $P$-guessing adversary.

\begin{restatable}{theorem}{ThmPGuessingCapacity}\label{thm:p-guessing-capacity}
	For any $\Ch$ and $P\subseteq\secretspace$ it holds that
	\begin{align*}
		\mulcapacity_{\GP}(\Ch)
		&\Wide=
		\meleakage_{\GP}(\Uni^{s,t},\Ch)
		\Wide=
		1 + \frac{1}{2}\; \MaxDist{\Ch_{P}}{\Ch_{\neg P}}
		~,
		\\
		\mulcapacity_{\LP}(\Ch)
		&\Wide=
		\meleakage_{\LP}(\Uni^{s,t},\Ch)
		\Wide=
		\frac{1}{ 1 - \frac{1}{2}\; \MaxDist{\Ch_{P}}{\Ch_{\neg P}} }
		~,
	\end{align*}
	for $s\in P,t\in\neg P$ realizing $\MaxDist{\Ch_{P}}{\Ch_{\neg P}}$.
\end{restatable}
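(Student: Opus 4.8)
The strategy is to use \autoref{lem:p-vs-bayes} to turn the $P$-guessing capacity into a Bayes-capacity question about two-row channels of the form $\WtoS\Ch$, and then read off the answer from \autoref{thm:mbc} and \autoref{thm:risk-mbc}. Concretely, for any prior $\pi\in\distset\secretspace$ we have $\meleakage_\GP(\pi,\Ch)=\meleakage_\GEx(\rho^\pi,\WtoS\Ch)\le\mulcapacity_\GEx(\WtoS\Ch)$, since capacity is by definition the largest leakage over all priors. The channel $\WtoS\Ch$ has exactly two rows, indexed by $P$ and $\neg P$, so \autoref{thm:mbc} gives $\mulcapacity_\GEx(\WtoS\Ch)=\sum_o\max\bigl((\WtoS\Ch)_{P,o},(\WtoS\Ch)_{\neg P,o}\bigr)$, which by the elementary identity $\max(a,b)=\tfrac{a+b}{2}+\tfrac{|a-b|}{2}$ (and the fact that the two rows are probability distributions) equals $1+\tfrac12\|(\WtoS\Ch)_P-(\WtoS\Ch)_{\neg P}\|_1$. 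Finally, by \eqref{eq:rho} and \eqref{eq:B}, $(\WtoS\Ch)_P=\sum_{s\in P}(\pi_s/\rho^\pi_P)\,\Ch_s\in\ConvHull{\Ch_P}$ and likewise $(\WtoS\Ch)_{\neg P}\in\ConvHull{\Ch_{\neg P}}$ (when $\rho^\pi_P$ or $\rho^\pi_{\neg P}$ is $0$ the leakage is trivially $1$, below the target bound), so it remains to show that $\|q_1-q_2\|_1\le\MaxDist{\Ch_P}{\Ch_{\neg P}}$ whenever $q_1\in\ConvHull{\Ch_P}$, $q_2\in\ConvHull{\Ch_{\neg P}}$. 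Writing $q_1$ and $q_2$ as convex combinations of the rows $\Ch_s$ and applying the triangle inequality twice pushes each maximum to an extreme point, giving $\|q_1-q_2\|_1\le\max_{s\in P,\,t\in\neg P}\|\Ch_s-\Ch_t\|_1=\MaxDist{\Ch_P}{\Ch_{\neg P}}$, hence the upper bound $\mulcapacity_{\GP}(\Ch)\le 1+\tfrac12\MaxDist{\Ch_P}{\Ch_{\neg P}}$.

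For the matching lower bound I would exhibit a prior attaining it: take $s\in P$ and $t\in\neg P$ realizing $\MaxDist{\Ch_P}{\Ch_{\neg P}}$, and set $\pi=\Uni^{s,t}$. A short computation from \eqref{eq:rho} and \eqref{eq:B} shows $\rho^\pi$ is uniform on $\{P,\neg P\}$ and that the two rows of $\WtoS\Ch$ are exactly $\Ch_s$ and $\Ch_t$; therefore, by \autoref{lem:p-vs-bayes} and \autoref{thm:mbc}, $\meleakage_{\GP}(\Uni^{s,t},\Ch)=\sum_o\max(\Ch_{s,o},\Ch_{t,o})=1+\tfrac12\|\Ch_s-\Ch_t\|_1=1+\tfrac12\MaxDist{\Ch_P}{\Ch_{\neg P}}$, which meets the upper bound. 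This proves both the value of $\mulcapacity_{\GP}(\Ch)$ and that $\Uni^{s,t}$ is an optimal prior. The risk case is structurally identical: \autoref{lem:p-vs-bayes} gives $\meleakage_{\LP}(\pi,\Ch)=\meleakage_{\LEx}(\rho^\pi,\WtoS\Ch)\le\mulcapacity_{\LEx}(\WtoS\Ch)$, which by \autoref{thm:risk-mbc} equals $1/\bigl(1-\tfrac12\,d\bigr)$ with $d=\|(\WtoS\Ch)_P-(\WtoS\Ch)_{\neg P}\|_1$ (the diameter of the two-row set); since $x\mapsto 1/(1-x/2)$ is increasing, the same convex-hull bound on $d$ yields $\mulcapacity_{\LP}(\Ch)\le 1/\bigl(1-\tfrac12\MaxDist{\Ch_P}{\Ch_{\neg P}}\bigr)$, and the same prior $\Uni^{s,t}$ attains equality.

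The proof has no genuinely hard step; the only points needing care are bookkeeping. First, one must dispatch the degenerate priors with $\rho^\pi_P\in\{0,1\}$ (leakage $1$) and the degenerate predicates $P=\emptyset$ or $P=\secretspace$ (again leakage $1$, and the right-hand side vacuous) — these are tacitly excluded by the statement. Second, when some $\Ch_s$ with $s\in P$ and $\Ch_t$ with $t\in\neg P$ have disjoint supports, $\MaxDist{\Ch_P}{\Ch_{\neg P}}=2$ and $\mulcapacity_{\LP}(\Ch)=\infty$, which is consistent with the convention already implicit in \autoref{thm:risk-mbc}. The one ingredient that warrants an explicit (if routine) argument is the claim that the maximum of $\|q_1-q_2\|_1$ over $\ConvHull{\Ch_P}\times\ConvHull{\Ch_{\neg P}}$ is attained at a pair of original rows; everything else is direct substitution into \autoref{lem:p-vs-bayes}, \autoref{thm:mbc}, and \autoref{thm:risk-mbc}.
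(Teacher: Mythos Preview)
Your proposal is correct and follows essentially the same approach as the paper: reduce via \autoref{lem:p-vs-bayes} to the Bayes capacity of the two-row channel $\WtoS\Ch$, evaluate that capacity as $1+\tfrac12\|(\WtoS\Ch)_P-(\WtoS\Ch)_{\neg P}\|_1$, bound the latter by $\MaxDist{\Ch_P}{\Ch_{\neg P}}$ using convexity, and attain the bound at $\pi=\Uni^{s,t}$. The only cosmetic difference is that the paper packages the binary-channel capacity formula and the convex-hull distance fact as two auxiliary lemmas (the latter proved via a ball-containment argument rather than your direct triangle-inequality computation), whereas you inline both.
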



\subsection{Distinguishing a specific secret}\label{sec:s-distinguishing}

As discussed, an adversary of particular interest is
one that tries to \emph{distinguish a secret $s$} from all other secrets,
that is to answer the question \emph{is the secret $s$ or not?}.
We call this adversary \emph{$s$-distinguishing}.

Clearly, this is a special case of a $P$-guessing adversary, for
$P = \{ s \}$. In this case, we simply write $s,\neg s$ instead of $P,\neg P$
respectively. Hence $g_s = \GP, \loss_s = \LP$ are the gain/loss functions modeling an
$s$-distinguishing adversary, and $\meleakage_{g_s}, \meleakage_{\loss_s}$ measure
the system's leakage wrt this adversary.

For this choice of $P$, the binary channel $\WtoS C:\{s, \neg
s\}\to\objectspace$ (see \eqref{eq:B}) has a clear interpretation. Its first
row is simply $\WtoS_s C = C_s$; more interestingly, its second row $\WtoS_{\neg s}C$ models
the behavior of an \emph{average} secret \emph{other than $s$}: choosing
randomly (wrt $\pi$) a secret $t \neq s$ and then executing $C$ on $t$
produces outputs distributed according to $\WtoS_{\neg s}C$. The usefulness of
this distribution will become apparent in the following sections.

Finally \autoref{thm:p-guessing-capacity} provides a direct solution of the capacity problem
for $s$-distinguishing adversaries.
The capacity will be given by the maximum distance $\MaxDist{C_s}{C_{\neg s}}$ between the
row $C_s$ and all other rows of the channel.

\section{Optimizing $q$ for a fixed prior $\pi$}\label{sec:fixed-prior}


We return to the problem discussed in the introduction, that is choosing the distribution of
observations $q$ produced by our secret of interest $s$. The QIF theory gives us a clear
goal
for this choice: we want to find the distribution $q\in\feasible$ that \emph{minimizes the leakage}
of the channel $C^q$.

Here $\feasible\subseteq\distset\objectspace$ denotes the set of \emph{feasible} solutions
for $q$; this set can be arbitrary, and is typically determined by practical
aspects of each application (e.g. it might be possible to increase the page size
by padding, but not to decrease it). The only assumption we make in this paper is that $\feasible$
can be expressed in terms of \emph{linear inequalities}.
A solution $\OptQ$ will be called \emph{optimal} iff it is both feasible and minimizes
leakage among all feasible solutions.

We start with the case when we have a specific prior $\pi$
modeling the system's usage profile, and we want to minimize leakage wrt that prior.
Since the behavior $C_t$ of all secrets $t\neq s$ is considered fixed, and their relative
probabilities are dictated by our fixed $\pi$, we know exactly how an ``average'' secret other than
$s$ behaves: it produces
observables with probability $\WtoS_{\neg s}C$ (see \autoref{sec:s-distinguishing}).
Conventional wisdom dictates that
we should try to make $s$ similar to an average non-$s$ secret, that is 
choose 
\[
	\OptQ\Wide =\WtoS_{\neg s}C 
\]
as our output distribution. Note that this $\OptQ$ is a \emph{convex combination}
of $C$'s rows (other than $s$), the elements of $\WtoS_{\neg s}$ being the convex coefficients.

We first show that this intuitive choice is indeed meaningful in a precise sense:
it minimizes leakage%
\footnote{Note that, for fixed $\pi$, optimizing $\meleakage_{g},\meleakage_{\loss}$
is equivalent to optimizing $V_g,R_\loss$ since the prior vulnerability/risk is constant.}
wrt an \emph{$s$-distinguishing} adversary (\autoref{sec:s-distinguishing}).
In fact, for this adversary the resulting channel has no leakage at all;
the adversary is trying to distinguish $s$ from $\neg s$, but the two cases
produce identical observations.


\begin{proposition}\label{prop-conv-hull-optimal}
	Let $\pi\in\distset\secretspace$, $\Ch:\secretspace\to\objectspace$, $s\in\secretspace$
	and let $\OptQ = \WtoS_{\neg s}C$. The channel $C^{\OptQ}$ has no leakage
	wrt an $s$-distinguishing adversary, that is:
	\[
		\meleakage_{\LorG}(\pi,\channel^{\OptQ})
			\Wide=
		1
		~,
		\quad \LorG\in\{ \GS, \LS \}
		~.
	\]
	
\end{proposition}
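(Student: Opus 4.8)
The plan is to reduce the claim to a leakage computation for an exact-guessing adversary on the binary channel $\WtoS C$, using \autoref{lem:p-vs-bayes}, and then observe that this binary channel has two identical rows. Concretely, since $\OptQ$ is chosen so that $C^{\OptQ}_s = \OptQ = \WtoS_{\neg s}C$, and since $C^{\OptQ}_t = C_t$ for $t \neq s$, the ``average non-$s$'' row $\WtoS_{\neg s}C^{\OptQ}$ is the same convex combination of the (unchanged) rows $C_t$, $t\neq s$, hence equals $\WtoS_{\neg s}C = \OptQ = C^{\OptQ}_s$. So in the binary channel $\WtoS C^{\OptQ} : \{s,\neg s\}\to\objectspace$, both rows coincide.

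Next I would invoke \autoref{lem:p-vs-bayes} with $P=\{s\}$: it gives
\[
\meleakage_{\GS}(\pi,\channel^{\OptQ}) \Wide= \meleakage_{\GEx}(\rho^\pi, \WtoS \channel^{\OptQ})
\quad\text{and}\quad
\meleakage_{\LS}(\pi,\channel^{\OptQ}) \Wide= \meleakage_{\LEx}(\rho^\pi, \WtoS \channel^{\OptQ}).
\]
Since $\WtoS \channel^{\OptQ}$ has two identical rows, it is equivalent to a ``null'' channel that reveals nothing about which of $s,\neg s$ holds; formally, for every $o$ and every pair of actions the inner expression $\sum_{w\in\{s,\neg s\}} \rho^\pi_w (\WtoS\channel^{\OptQ})_{w,o}\, g(\cdot,w)$ has the same structure before and after observing, so $V_{\GEx}(\rho^\pi,\WtoS\channel^{\OptQ}) = V_{\GEx}(\rho^\pi)$ and likewise $R_{\LEx}(\rho^\pi,\WtoS\channel^{\OptQ}) = R_{\LEx}(\rho^\pi)$. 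Both multiplicative leakages are therefore $1$. (An even quicker route: a channel with identical rows factors as a post-processing of the trivial channel, so by the data-processing inequality its leakage is at most that of the trivial channel, namely $1$, and leakage is always $\ge 1$.)

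The main obstacle — really the only non-routine point — is the first step: verifying that replacing row $s$ by $\WtoS_{\neg s}C$ does not disturb the quantity $\WtoS_{\neg s}C$ itself. This is immediate here because $\WtoS_{\neg s}$ (as defined in \eqref{eq:B} with $P=\{s\}$) places zero weight on the $s$-coordinate, so $\WtoS_{\neg s}C$ depends only on rows $C_t$ with $t\neq s$, all of which are untouched in $C^{\OptQ}$. One should also handle the degenerate case $\pi_s = 1$ (so $\rho^\pi_{\neg s}=0$ and $\WtoS_{\neg s}$ is undefined / defined arbitrarily): then the $s$-distinguishing adversary already has certainty and prior vulnerability is $1$, so leakage is trivially $1$ regardless of $q$; I would dispatch this case separately in one line.

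Finally I would remark that this proposition only addresses the $s$-distinguishing adversary and says nothing about the exact-guessing one; that case is genuinely different and is treated later (it does not admit zero leakage in general), which motivates the subsequent development.
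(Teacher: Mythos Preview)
Your proposal is correct and matches the paper's reasoning. The paper does not give a formal proof of this proposition; it offers only the one-line justification that ``the adversary is trying to distinguish $s$ from $\neg s$, but the two cases produce identical observations.'' Your argument is precisely the natural formalization of that sentence via \autoref{lem:p-vs-bayes}: the binary channel $\WtoS C^{\OptQ}$ has both rows equal to $\OptQ$, hence zero leakage. Your observation that $\WtoS_{\neg s}$ puts no weight on coordinate $s$ (so $\WtoS_{\neg s}C^{\OptQ}=\WtoS_{\neg s}C$) is exactly the point that makes this work, and your separate handling of the degenerate case $\pi_s=1$ is a nice bit of care the paper omits.
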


An immediate consequence of \autoref{prop-conv-hull-optimal} is that,
if $\OptQ=\WtoS_{\neg s}C \in\feasible$, then it is optimal for an $s$-distinguishing adversary.
However, it could very well be the case that this construction is infeasible,
in which case finding the optimal feasible solution is more challenging.

Moreover, somewhat surprisingly, it turns out that even when  $\OptQ = \WtoS_{\neg s}C$ is
feasible, it does
\emph{not} minimize leakage wrt an exact-guessing adversary. Consider the following example:
\[
	\pi \Wide= (0.47, 0.29,   0.24)
	\qquad
	\Ch^q \Wide=
	\begin{bmatrix}
	q_1 & q_2 \\
	0.05  &  0.95  \\
	0.58  & 0.42
	\end{bmatrix}
\]
For this channel and prior we get
$\OptQ = \WtoS_{\neg s}C = (0.29, 0.71)$.
Consider also the distribution
$q = (0.42, 0.58)$. We have that
\begin{align*}
	\meleakage_{\LEx}(\pi, \channel^{\OptQ}) &\Wide\approx 1.1 ~,
		& \meleakage_{\LS}(\pi, \channel^{\OptQ}) &\Wide= 1 ~,
	\\
	\meleakage_{\LEx}(\pi, \channel^{q}) &\Wide= 1 ~,
		& \meleakage_{\LS}(\pi, \channel^{q}) &\Wide\approx 1.01 ~.
\end{align*}
We see that $\OptQ$ does indeed minimize leakage for an $s$-distinguishing adversary
(in fact, the leakage becomes $1$),
but it does not minimize the exact-guessing leakage; the latter is minimized by $q$.

Although the simple construction $\OptQ=\WtoS_{\neg s}C$ does not always produce
an optimal solution, as discussed above,
finding an optimal one is still possible for all adversaries and
arbitrary $\feasible$, via linear programming.

\begin{proposition} \label{optimazation_prop2}
	The optimization problem
	\begin{equation}
		\OptQ \Wide{:=} \argmin_{q\in\feasible} \meleakage_\LorG(\pi,\Ch^q)	
		~,\quad \LorG\in\{ \GEx, \GS, \LEx, \LS \} ~,
		\label{eq9324}
	\end{equation}
	can be solved in polynomial time via linear programming.
\end{proposition}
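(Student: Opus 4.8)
The plan is to show that, for a fixed prior, all four objectives collapse to the minimisation of one and the same kind of function of $q$ --- a finite sum of pointwise maxima of affine functions --- over the polytope $\feasible$, and that such a minimisation is a linear program. First I would strip off the normalisation. Since $\pi$ is fixed, $V_\GEx(\pi)=\max_s\pi_s$ and $R_\LEx(\pi)$ are positive constants independent of $q$, so minimising $\meleakage_\GEx(\pi,\Ch^q)=V_\GEx(\pi,\Ch^q)/V_\GEx(\pi)$ is the same as minimising the posterior Bayes vulnerability $V_\GEx(\pi,\Ch^q)$; and since $R_\LEx(\pi,\Ch^q)=1-V_\GEx(\pi,\Ch^q)$, minimising $\meleakage_\LEx(\pi,\Ch^q)=R_\LEx(\pi)/R_\LEx(\pi,\Ch^q)$ means \emph{maximising} $R_\LEx(\pi,\Ch^q)$, i.e.\ once more minimising $V_\GEx(\pi,\Ch^q)$. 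For the $s$-distinguishing objectives $\LorG\in\{\GS,\LS\}$ I would invoke \autoref{lem:p-vs-bayes} with $P=\{s\}$, which rewrites them as $\meleakage_\GEx(\rho^\pi,\WtoS\Ch^q)$ and $\meleakage_\LEx(\rho^\pi,\WtoS\Ch^q)$ respectively; with $\rho^\pi$ fixed, the same manipulation reduces both to minimising $V_\GEx(\rho^\pi,\WtoS\Ch^q)$. Thus all four problems become: minimise a posterior Bayes vulnerability in which only the $s$-th row, namely $q$, is free.

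Next I would write that posterior vulnerability explicitly in $q$. Because only the row $\Ch^q_s=q$ varies, setting the constants $M_o:=\max_{t\neq s}\pi_t\Ch_{t,o}$ gives
\[
	V_\GEx(\pi,\Ch^q)\Wide=\textstyle\sum_o\max\!\big(\pi_s\,q_o,\;M_o\big)~.
\]
For the $s$-distinguishing case, $\WtoS\Ch^q$ is (as noted in \autoref{sec:s-distinguishing}) the binary channel whose first row is $q$ and whose second row is the fixed vector $a:=\WtoS_{\neg s}\Ch$, while $\rho^\pi=(\pi_s,\,1-\pi_s)$, so
\[
	V_\GEx(\rho^\pi,\WtoS\Ch^q)\Wide=\textstyle\sum_o\max\!\big(\pi_s\,q_o,\;(1-\pi_s)\,a_o\big)~.
\]
In both cases the objective has the form $\sum_o\max(\alpha\,q_o,\beta_o)$ with a positive scalar $\alpha$ and nonnegative constants $\beta_o$: a finite sum of maxima of affine functions of $q$, hence convex and piecewise linear.

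It then remains to linearise. By hypothesis $\feasible$ is cut out by finitely many linear inequalities, and it sits inside the probability simplex $\distset\objectspace$ (the constraints $q_o\ge 0$ and $\sum_o q_o=1$); hence $\feasible$ is a polytope, described by $O(\nobservables)$ plus finitely many linear constraints. Minimising $\sum_o\max(\alpha q_o,\beta_o)$ over $\feasible$ is equivalent to the linear program ``minimise $\sum_o z_o$ subject to $q\in\feasible$ and, for every $o$, $z_o\ge\alpha q_o$ and $z_o\ge\beta_o$'', since at an optimum each $z_o$ must equal $\max(\alpha q_o,\beta_o)$. This LP has $O(\nobservables)$ variables and $O(\nobservables)$ constraints beyond those defining $\feasible$, and a linear program is solvable in time polynomial in its size (e.g.\ by the ellipsoid or an interior-point method); reading off the $q$-component of an optimal solution yields $\OptQ$. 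Carrying this out for each of the four $\LorG$ --- the two exact-guessing cases sharing one LP, the two $s$-distinguishing cases another --- proves the proposition.

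I expect the only real care to be needed in the bookkeeping rather than in any deep step. One must track signs so that for the risk/loss metrics ``minimise leakage'' correctly becomes ``maximise the denominator $R_\LEx$'', which through $R_\LEx=1-V_\GEx$ turns back into a \emph{minimisation} of a convex function --- otherwise one would be left with the much harder task of maximising a convex function. One must also check that \autoref{lem:p-vs-bayes} legitimately transports the $s$-distinguishing objective to a Bayes objective for $\WtoS\Ch^q$, using that $\WtoS$ is a \emph{fixed} row-stochastic matrix and that $\Ch^q_s=q$, so $\WtoS\Ch^q$ still depends affinely on $q$. Finally, the degenerate cases in which a leakage ratio is $0/0$ or $+\infty$ (a point-mass $\pi$, or $R_\LEx(\pi,\Ch^q)=0$ for every feasible $q$) should be acknowledged, but they do not affect the argmin returned by the LP.
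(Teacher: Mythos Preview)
Your proposal is correct and follows essentially the same approach as the paper: the paper's own treatment of \autoref{optimazation_prop2} is not a formal proof but simply the two linear programs (for $\GEx$ and $\GS$) stated inline, together with the remark that for fixed $\pi$ optimizing leakage is equivalent to optimizing posterior vulnerability/risk. Your derivation recovers exactly those LPs --- in particular your constant $(1-\pi_s)a_o$ equals $\sum_{t\neq s}\pi_t C_{t,o}$, matching the paper's $\GS$ program --- and additionally spells out why the $\LEx,\LS$ cases collapse to the same programs via $R_\LEx=1-V_\GEx$ and \autoref{lem:p-vs-bayes}, which the paper leaves implicit.
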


For instance, the optimal solution for the
$\GEx$ adversary is given by the following linear program
(recall that we assumed $q \in \feasible$ to be expressible in terms
of linear inequalities):

\begin{align*}
	\text{minimize} \qquad & {\textstyle\sum_{o\in\objectspace}} z_o
	\\
	\text{subject to}\qquad & q \in \feasible
	\\
	& z_o \ge \max_{t\in\secretspace\setminus\{s\}} \pi_t C_{t,o}    &\forall o\in \objectspace
	\\
	& z_o \ge \pi_s q_o  &\forall o\in \objectspace
\end{align*}
While for $\GS$, the program is similar:
\begin{align*}
	\text{minimize} \qquad & {\textstyle\sum_{o\in\objectspace}} z_o
	\\
	\text{subject to}\qquad & q \in \feasible
	\\
	& z_o \ge  \textstyle\sum_{t\in\secretspace\setminus\{s\}}  \pi_t C_{t,o}    &\forall o\in \objectspace
	\\
	& z_o \ge \pi_s q_o  &\forall o\in \objectspace
\end{align*}


\section{Optimizing $q$ for an unknown prior}\label{sec:unknown-prior}

In the previous section we discussed finding the optimal $q$ then the prior
$\pi$ (i.e. the user profile) is fixed. In practice, however, we often do not
know $\pi$ or we do not want to restrict to a specific one. The natural goal
then is to design our system wrt the \emph{worst} possible prior.
The QIF theory will again offer guidance in selecting $q$; this time
we choose the one that minimizes $C^q$'s \emph{capacity}, i.e. we minimize its maximum
leakage wrt all priors.

\subsection{Exact-guessing adversary}

Starting with the problem of optimizing $q$ wrt an exact-guessing adversary,
in this section we make two observations. First, we show that finding an
optimal $q$ is possible either via simple convex combinations of rows (provided 
that such solutions are feasible), or via linear programming.
Second, and more important, we show that
selecting $q$ solely wrt this adversary is a poor design choice, since many
values are simultaneously optimal, although their behavior is not equivalent
for other adversaries.

Recall that the natural choice of $q$ for a fixed prior
(\autoref{sec:fixed-prior}) was $\OptQ\Wide =\WtoS_{\neg s}C$, which is a
convex combination of the rows $C_{\neg s}$. It turns out that for an unknown
prior and an exact-guessing adversary we can choose much more freely:
\emph{any} convex combination of the rows $C_{\neg s}$ minimizes
capacity. To understand
this fact, consider $\GEx$-capacity and recall that $\mulcapacity_\GEx(C^q)$ is
given by the sum of the column maxima of the channel (\autoref{thm:mbc}).
Adding a new row cannot decrease the column maxima, hence
$\mulcapacity_\GEx(\Ch^q) \ge \mulcapacity_\GEx(C^q_{\neg s})$. Moreover,
achieving equality is trivial: setting $q$ to any convex combination of rows
means that no element of $q$ can be strictly greater than all corresponding
elements of $C^q_{\neg s}$, hence $\Ch^q$ and $C^q_{\neg s}$ will have the
exact same column maxima. This brings us to the following result:

\begin{restatable}{proposition}{CHOptimalForExact}\label{prop:unknown-exact-optimal}
	For all $\Ch$,
	any $\OptQ \in \ConvHull{C_{\neg s}}$
	minimizes capacity for exact-guessing adversaries, that is
	\[
		\mulcapacity_\LorG(\channel^{\OptQ})
		\Wide\le
		\mulcapacity_\LorG(\channel^{q})
		\;\quad\forall q\in\distset\objectspace, \LorG\in\{ \GEx,\LEx \}
		~.
	\]
	Moreover, it holds that $\mulcapacity_\LorG(\Ch^{\OptQ}) = \mulcapacity_\LorG(C_{\neg s})$.
\end{restatable}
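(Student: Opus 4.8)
The plan is to prove the two displayed claims separately: first the optimality inequality, and then the exact equality $\mulcapacity_\LorG(\channel^{\OptQ}) = \mulcapacity_\LorG(C_{\neg s})$. By \autoref{thm:mbc} the $\GEx$-capacity of any channel is the sum of its column maxima, and by \autoref{thm:risk-mbc} the $\LEx$-capacity is $\frac{1}{1-\frac12\Diam{\cdot}}$, a monotone function of the diameter of the set of rows; so throughout I would work directly with column maxima for the $\GEx$ case and with the diameter of the row set for the $\LEx$ case. The two halves of the statement will then follow from two elementary monotonicity observations about what happens when one adds a row to a channel.

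First I would establish the \emph{lower bound} $\mulcapacity_\LorG(\channel^q) \ge \mulcapacity_\LorG(C_{\neg s})$ that holds for \emph{every} $q\in\distset\objectspace$, not just convex combinations. For $\GEx$: adding a row to a channel can only increase each column maximum (the max over a larger index set is no smaller), so the sum of column maxima of $\channel^q$ is at least that of $C_{\neg s}$. For $\LEx$: the set of rows of $\channel^q$ contains the set of rows of $C_{\neg s}$, so its diameter is at least as large, and since $x\mapsto \frac{1}{1-x/2}$ is increasing on $[0,2)$ the capacity is at least as large. This already gives the right-hand side of the inequality as a universal lower bound.

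Next I would show that this lower bound is \emph{attained} precisely when $q\in\ConvHull{C_{\neg s}}$, which simultaneously gives the optimality inequality (the left-hand channel meets the universal lower bound, hence beats every competitor) and the equality. For $\GEx$: if $q = \sum_{t\neq s}\lambda_t C_t$ with $\lambda$ a convex coefficient vector, then for each column $o$ we have $q_o = \sum_t \lambda_t C_{t,o} \le \max_{t\neq s} C_{t,o}$, so row $q$ never strictly exceeds the existing column maximum; therefore $\channel^q$ and $C_{\neg s}$ have identical column maxima, and their $\GEx$-capacities coincide. For $\LEx$: I need $\Diam{\{q\}\cup C_{\neg s}} = \Diam{C_{\neg s}}$, i.e. that no new row can be farther from some existing row than the existing diameter; this is where a short convexity argument is needed — for a fixed row $C_{t'}$, the map $r\mapsto \|r - C_{t'}\|_1$ is convex, so its maximum over the simplex $\ConvHull{C_{\neg s}}$ is attained at a vertex, i.e. at some $C_t$, giving $\|q - C_{t'}\|_1 \le \max_t \|C_t - C_{t'}\|_1 \le \Diam{C_{\neg s}}$. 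Taking the max over $t'$ shows the diameter is unchanged, hence the $\LEx$-capacities coincide.

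The main obstacle is the $\LEx$ half of the second step: unlike the $\GEx$ case, where the column-max identity is immediate, bounding the diameter after adding a convex combination requires the observation that the $\lnorm{1}$-distance to a fixed point is a convex function and hence maximized at an extreme point of $\ConvHull{C_{\neg s}}$ — and one must be slightly careful that the extreme points of $\ConvHull{C_{\neg s}}$ are among the rows $C_t$ (which holds since $C_{\neg s}$ is a finite set). Everything else is routine monotonicity. I would also remark in passing that the inequality for $\LEx$ must be read with the convention already noted after \autoref{thm:risk-mbc}, namely that $s,t$ here range over the rows realizing the relevant diameter.
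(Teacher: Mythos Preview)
Your proposal is correct and follows essentially the same structure as the paper's proof: both halves reduce to showing that adding a row from $\ConvHull{C_{\neg s}}$ leaves the column maxima (for $\GEx$) and the diameter (for $\LEx$) unchanged, while adding an arbitrary row can only increase them. The $\GEx$ argument is identical to the paper's.

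For the $\LEx$ half there is a minor packaging difference worth noting. You argue inline that $r\mapsto\|r-C_{t'}\|_1$ is convex and hence maximized at a vertex of $\ConvHull{C_{\neg s}}$. The paper instead invokes a separately stated auxiliary result, \autoref{lem:maxdist-convex-hull}, asserting that $\MaxDist{\ConvHull{S}}{\ConvHull{T}}=\MaxDist{S}{T}$ for any norm (proved via convexity of norm balls), which immediately yields $\Diam{\ConvHull{S}}=\Diam{S}$. Your vertex argument is more self-contained; the paper's lemma is slightly more general (arbitrary norms, two sets rather than one) and is reused in the proof of \autoref{thm:p-guessing-capacity}, which is the reason it was factored out. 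Either route is fine here.
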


A direct consequence of \autoref{prop:unknown-exact-optimal} is that
any convex combination of the rows $C_{\neg s}$ that happens to be feasible,
that is any $\OptQ \in \feasible \cap \ConvHull{C_{\neg s}}$,
is an optimal solution for an exact-guessing adversary.
Note that this is a sufficient
but not necessary condition for optimality. In \autoref{sec:unknown-s-dist}
we see that solutions outside the convex hull can be also optimal.

On the other hand, there is no guarantee that any such solution exists,
it could very well be the case that no convex combination of rows is feasible.
In this case, we can still compute an optimal solution, as follows:

\begin{proposition} \label{prop:unknown-exact-lp}
	The optimization problem
	\begin{equation}
		\OptQ \Wide{:=} \argmin_{q\in\feasible} \mulcapacity_\LorG(\Ch^q)	
		~,\quad \LorG\in\{ \GEx, \LEx \} ~,
		\label{eq9324}
	\end{equation}
	can be solved in polynomial time via linear programming.
\end{proposition}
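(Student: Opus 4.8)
The plan is to handle the two functions $\GEx$ and $\LEx$ separately, reducing each to a linear program whose number of variables and constraints is polynomial in $|\secretspace|$ and $|\objectspace|$; the claim then follows since linear programs are solvable in polynomial time, and $\feasible$ is assumed throughout to be cut out by linear inequalities.

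First, for the exact-guessing gain $\GEx$, I would start from \autoref{thm:mbc}, which expresses $\mulcapacity_\GEx$ of any channel as the sum of its column maxima. Since every row $\Ch_t$ with $t\neq s$ is fixed and the only free row is $q$, setting $M_o \Wide{:=} \max_{t\in\secretspace\setminus\{s\}} \Ch_{t,o}$ (a constant) yields
\[
	\mulcapacity_\GEx(\Ch^q) \Wide= \sum_{o\in\objectspace} \max\bigl(q_o,\, M_o\bigr),
\]
which is piecewise linear and convex in $q$. I would linearize it in the standard way: minimize $\sum_o z_o$ subject to $q\in\feasible$ and $z_o\ge q_o$, $z_o\ge M_o$ for every $o$. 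At any optimum $z_o=\max(q_o,M_o)$, so the optimal LP value equals $\min_{q\in\feasible}\mulcapacity_\GEx(\Ch^q)$ and the minimizing $q$ is the desired $\OptQ$; this is in fact the program displayed after \autoref{optimazation_prop2} with the prior weights removed.

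Second, for the exact-guessing loss $\LEx$, I would instead invoke \autoref{thm:risk-mbc}, which gives $\mulcapacity_\LEx(\Ch) = \bigl(1-\frac{1}{2}\Diam{\Ch_\secretspace}\bigr)^{-1}$, a strictly increasing function of the $L_1$-diameter of the set of rows. Hence minimizing the capacity over $q$ is the same as minimizing $\Diam{\Ch^q_\secretspace}$, and this diameter decomposes as
\[
	\Diam{\Ch^q_\secretspace} \Wide= \max\Bigl( D,\; \max_{t\in\secretspace\setminus\{s\}} \|q-\Ch_t\|_1 \Bigr),
\]
where $D \Wide{:=} \Diam{\Ch_{\neg s}}$ depends only on the fixed rows. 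Since $\argmin_q f(q)\subseteq\argmin_q\max(D,f(q))$ for any $f$, and the capacity is a monotone transformation of $\max(D,f(q))$, it suffices to minimize the eccentricity $f(q)\Wide{:=}\max_{t\neq s}\|q-\Ch_t\|_1$ over $\feasible$, which is a constrained Chebyshev-center (smallest enclosing $L_1$-ball) problem. I would write it as ``minimize $r$ subject to $q\in\feasible$ and $\|q-\Ch_t\|_1\le r$ for all $t\neq s$'', turning each $L_1$-ball constraint into linear ones via auxiliary variables $u_{t,o}$ with $u_{t,o}\ge q_o-\Ch_{t,o}$, $u_{t,o}\ge \Ch_{t,o}-q_o$, and $\sum_o u_{t,o}\le r$.

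The only step that needs care is the $\LEx$ reduction: one must justify that minimizing the eccentricity $f$ rather than the full diameter $\max(D,f)$ still returns a capacity-minimizer — this is the inclusion of argmin sets above — and I would flag that the set of capacity-minimizers may be strictly larger when $D$ dominates, so the LP is only claimed to return \emph{some} optimal $q$, not to describe all of them; one also checks that the $L_1$-balls together with $\feasible$ are jointly LP-representable. Everything else (counting variables and constraints, and verifying that the LP optima realize the claimed minima) is routine bookkeeping.
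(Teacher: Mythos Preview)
Your proposal is correct and, for $\GEx$, it reproduces the paper's argument verbatim: the linear program you write down is exactly the one the paper displays right after the proposition.

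For $\LEx$ the paper gives no explicit program; your route via \autoref{thm:risk-mbc}, reducing capacity minimization to minimizing $\Diam{\Ch^q_\secretspace}$ and then to the constrained eccentricity $\max_{t\neq s}\|q-\Ch_t\|_1$, is sound and is in fact the constrained $(\feasible,\lnorm{1})$-SEB program the paper later spells out in \autoref{sec:seb} (\autoref{thm:seb-exact}). Your inclusion $\argmin_q f(q)\subseteq\argmin_q\max(D,f(q))$ is the right justification that solving the smaller problem already yields a capacity minimizer, and your caveat that this returns \emph{some} but not necessarily \emph{all} optima is accurate and worth keeping.
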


The linear program for $\GEx$ is given below:
\begin{align*}
	\text{minimize} \qquad & {\textstyle\sum_{o\in\objectspace}} z_o
	\\
	\text{subject to}\qquad & q \in \feasible
	\\
	& z_o \ge  \textstyle\max_{t\in\secretspace\setminus\{s\}}  C_{t,o}    &\forall o\in \objectspace
	\\
	& z_o \ge q_o  &\forall o\in \objectspace
\end{align*}

\autoref{prop:unknown-exact-optimal} states that a large set of choices for $q$ are all equivalent
from the point of view of the exact-guessing capacity.
However, this is not true wrt other types of adversaries, as the following
example demonstrates:
\[
C^q \Wide=
\begin{array}{ccc}
	& o_1 & o_2\\
	\cline{2-3}
	\multicolumn{1}{c|}{s} & q_1 & \multicolumn{1}{c|}{q_2} \\
	\multicolumn{1}{c|}{s_1} & 1  & \multicolumn{1}{c|}{0} \\
	\multicolumn{1}{c|}{s_2} & 0 & \multicolumn{1}{c|}{1} \\
	\cline{2-3}
\end{array}
\]
Here $C_{\neg s}$ is a deterministic channel with only 2 secrets that are
completely distinguishable. For instance, two websites, without any
obfuscation mechanism, having distinct page sizes.
From \autoref{thm:mbc} we can compute
$\mulcapacity_\GEx(C^q_{\neg s}) = 2$.
Since the rows $s_1,s_2$ are already maximally distant, the choice of $q$ is
irrelevant. For \emph{any} $q$, the rows $s_1,s_2$ will still be maximally
distant, giving $\mulcapacity_\GEx(C^q) = 2$.

Although $q$ does not affect $\mulcapacity_\GEx$, this does not mean that the choice of
$q$ is irrelevant for the website $s$. Setting $q = (1, 0)$ we make $s$
indistinguishable from $s_1$ but completely distinguishable from $s_2$.
Conversely, $q = (0, 1)$ makes $s$ indistinguishable from $s_2$ but
completely distinguishable from $s_1$. Finally, $q = (\nicefrac{1}{2},
\nicefrac{1}{2})$ makes $s$ somewhat indistinguishable from both $s_1$ and
$s_2$; intuitively the latter seems to be a preferable choice, but \emph{why}?

To better understand how $q$ affects the security of this channel we should
study the difference between an exact-guessing and an $s$-distinguishing
adversary. For the former, recall that $\mulcapacity_\GEx$ is always given by
a uniform prior $\Uni$. For such a prior, an adversary who guesses the secret
after observing the output will always guess $s_1$ after seeing $o_1$
(because $s_1$ always produces $o_1$) and $s_2$ after seeing $o_2$,
independently from the values of $q$. So intuitively $q$ does not affect this
adversary at all, which is the reason why
$\mulcapacity_\GEx(C^q) = 2$ for any $q$.

However, for an $s$-distinguishing adversary, the situation is very
different. When $q = (1,0)$ we can use \autoref{thm:p-guessing-capacity} to
compute $\mulcapacity_{\GS}(\Ch^q) = 2$, given by the prior $\Uni^{s,s_2}$;
for this prior the adversary can trivially infer whether the secret is $s$ or
$\neg s$ after the observation. But for $q' =
(\nicefrac{1}{2},\nicefrac{1}{2})$ we get $\mulcapacity_{\GS}(\Ch^{q'}) =
\nicefrac{3}{2}$, realized by both $\Uni^{s,s_1}$ and $\Uni^{s,s2}$. The
system provides non-trivial privacy even in the worst case; $s$ and $\neg s$
can never be fully distinguished.

The discussion above suggests that maximizing the exact-guessing capacity by
itself does not fully guide us in choosing $q$; it is meaningful to also optimize
wrt an $s$-distinguishing adversary. 
In fact, in the next section we see that optimizing
wrt \emph{both} $s$-distinguishing and exact-guessing adversaries
simultaneously is sometimes possible.



\subsection{$s$-distinguishing adversary}\label{sec:unknown-s-dist}

We turn our attention to optimizing $q$ wrt an $s$-distinguishing adversary for an unknown prior.
We already know from \autoref{thm:p-guessing-capacity} (for $P = \{s\}$) that
both $\GS$ and $\LS$-capacities depend on the
maximum distance $\MaxDist{C_s}{C_{\neg s}}$ between the row $C_s$ and all
other rows of the channel. In other words, the capacity is related to the radius of the
smallest $\lnorm{1}$-ball centered
at $\Ch_s$ that contains $\Ch_\secretspace$.

This gives us a direct way of optimizing $q$ wrt $\GS,\LS$-capacity by a
solving a \emph{geometric} problem known as the \emph{smallest enclosing
ball} (SEB): find a vector that minimizes its maximal distance to a set of known
vectors, or equivalently find the smallest ball that includes this set.

Interestingly, it turns out that in one particular case the SEB solution is
guaranteed to be simultaneously  optimal wrt an exact-guessing adversary.  This
happens in the \emph{unconstrained} case $\feasible = \distset\objectspace$,
that is when any solution $q$ is feasible, as stated in the following
result.

\begin{restatable}{theorem}{Minball}\label{thm-minball}
	For all $C:\secretspace\to\objectspace$, any distribution given by
	\[
		\OptQ
		\Wide\in
		\argmin_{q\in\feasible} \;\MaxDist{q}{C_{\neg s}}
	\]
	gives optimal capacity for $s$-distin\-guishing adversaries:
	\begin{align*}
		\mulcapacity_\LorG(\channel^{\OptQ})
			&\Wide\le
			\mulcapacity_\LorG(\channel^{q})
			~,\quad \LorG\in\{ \GS, \LS \},
			q\in\feasible
			~.
	\end{align*}
	Moreover, if
	$\feasible = \distset\objectspace$, then $\OptQ$
	is simultaneously optimal for exact-guessing adversaries,
	i.e. for $\LorG\in\{ \GEx, \LEx \}$.
\end{restatable}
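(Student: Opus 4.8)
The plan is to rewrite each capacity as a monotone function of a single geometric quantity, so that minimizing it becomes exactly the SEB problem in the statement. For the $s$-distinguishing claim I would invoke \autoref{thm:p-guessing-capacity} with $P=\{s\}$: since $\Ch^q$ differs from $\Ch$ only in row $s$, we have $\Ch^q_s=q$ and $\Ch^q_{\neg s}=C_{\neg s}$, hence $\mulcapacity_{\GS}(\Ch^q)=1+\tfrac12\MaxDist{q}{C_{\neg s}}$ and $\mulcapacity_{\LS}(\Ch^q)=(1-\tfrac12\MaxDist{q}{C_{\neg s}})^{-1}$, both strictly increasing in $\MaxDist{q}{C_{\neg s}}$. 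Hence any $\OptQ$ minimizing $\MaxDist{q}{C_{\neg s}}$ over $\feasible$ also minimizes both capacities over $\feasible$, with no assumption on $\feasible$ needed.

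For the exact-guessing part (now with $\feasible=\distset\objectspace$) write $M_o=\max_{t\neq s}C_{t,o}$ and let $q^\star_o$ denote the coordinates of an SEB center $\OptQ$. The $\LEx$ case is again pure monotonicity: the rows of $\Ch^q$ are $C_{\neg s}\cup\{q\}$, so $\Diam{\Ch^q_\secretspace}=\max\{\Diam{C_{\neg s}},\MaxDist{q}{C_{\neg s}}\}$, which is non-decreasing in $\MaxDist{q}{C_{\neg s}}$; by \autoref{thm:risk-mbc} and monotonicity of $x\mapsto(1-x/2)^{-1}$ the SEB center minimizes $\mulcapacity_\LEx$ (this step does not even use $\feasible=\distset\objectspace$). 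For $\GEx$ I would use \autoref{thm:mbc}: $\mulcapacity_\GEx(\Ch^q)=\sum_o\max(q_o,M_o)\ge\sum_o M_o=\mulcapacity_\GEx(C_{\neg s})$, with equality precisely when $q_o\le M_o$ for all $o$, and $\sum_o M_o$ is the global minimum (cf.\ \autoref{prop:unknown-exact-optimal}). So the $\GEx$ case collapses to one claim, the ``box bound'': every SEB center satisfies $q^\star_o\le M_o$ for every $o$.

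I expect the box bound to be the only genuinely non-routine step; everything above is bookkeeping on top of the capacity formulas. The argument I would attempt is a local perturbation. Suppose $q^\star_{o_0}>M_{o_0}$ for some $o_0$, let $r^\star=\MaxDist{\OptQ}{C_{\neg s}}$, and let $A$ be the set of rows $t$ realizing this maximum. Each $t\in A$ has ``slack'' away from $o_0$: since $C_{t,o_0}\le M_{o_0}<q^\star_{o_0}$ and both $\OptQ,C_t$ are distributions, $\sum_{o\neq o_0}(C_{t,o}-q^\star_o)=q^\star_{o_0}-C_{t,o_0}>0$, so some $o\neq o_0$ has $C_{t,o}>q^\star_o$. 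Now move a small mass $\varepsilon$ out of $o_0$, spread \emph{uniformly} over the remaining $n-1$ coordinates ($n=|\objectspace|$; $n=1$ is vacuous). For $\varepsilon$ small the result $q^\varepsilon$ is still a distribution, and for each $t\in A$ the $\lnorm{1}$ distance changes by $-\varepsilon$ at $o_0$ (because $q^\star_{o_0}>C_{t,o_0}$), by at most $+\varepsilon/(n-1)$ at each other coordinate, and by $-\varepsilon/(n-1)$ at a slack coordinate; summing, $\|q^\varepsilon-C_t\|_1\le r^\star-\tfrac{2\varepsilon}{n-1}<r^\star$. For $t\notin A$ the distance starts strictly below $r^\star$ and moves by at most $2\varepsilon$, so it stays below. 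Hence $\MaxDist{q^\varepsilon}{C_{\neg s}}<r^\star$, contradicting optimality of $\OptQ$. The delicate point — and the reason the surplus mass is spread uniformly rather than dumped into a single coordinate — is that $A$ need not be a singleton and its rows may have their slack in different coordinates, so a uniform shift is needed to exploit all of them at once; this is also where $\feasible=\distset\objectspace$ is essential, since a general feasible perturbation need not stay in $\feasible$.
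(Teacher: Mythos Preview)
Your proof is correct and follows essentially the same approach as the paper: reduce the $s$-distinguishing capacities to $\MaxDist{q}{C_{\neg s}}$ via \autoref{thm:p-guessing-capacity}, reduce $\mulcapacity_\LEx$ to the diameter via \autoref{thm:risk-mbc}, and for $\mulcapacity_\GEx$ establish the ``box bound'' $\OptQ \preccurlyeq (C_{\neg s})^\top$ by a mass-redistribution perturbation.

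Two minor differences are worth noting. First, your $\LEx$ argument is slightly cleaner and in fact more general than the paper's: you observe directly that $\max\{\Diam{C_{\neg s}},\MaxDist{q}{C_{\neg s}}\}$ is non-decreasing in $\MaxDist{q}{C_{\neg s}}$, which works for arbitrary $\feasible$; the paper instead uses $\feasible=\distset\objectspace$ to conclude that each $C_t$ is feasible, hence $\MaxDist{\OptQ}{C_{\neg s}}\le\Diam{C_{\neg s}}$, and then that $\Diam{C^{\OptQ}_\secretspace}=\Diam{C_{\neg s}}$. Second, the paper isolates the box bound as a standalone lemma (\autoref{lem:seb-maxima}) and proves the stronger fact that the same uniform perturbation \emph{strictly} decreases $\|q-C_t\|_1$ for \emph{every} row $t$ simultaneously (not just those in the active set $A$), by choosing $\varepsilon$ below $\min_{o,t:\,\OptQ[o]\neq C_{t,o}}|\OptQ[o]-C_{t,o}|$ so that no sign flips; this makes the split into $A$ and $\secretspace\setminus(A\cup\{s\})$ unnecessary. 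Your version is equally valid, just organized differently.
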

Note that the solution $\OptQ$ obtained from the above
result might lie \emph{outside}
the convex hull of $C_{\neg s}$.
So, in the unconstrained case, the simple optimality
conditions of
\autoref{prop:unknown-exact-optimal} are \emph{not} met,
yet the resulting solution is still guaranteed to be optimal
also for exact-guessing adversaries.

The smallest enclosing ball problem is discussed in the next section,
showing that it can be solved in linear time on $|\secretspace|$ for fixed $|\objectspace|$,
or in polynomial time on $|\secretspace|\cdot |\objectspace|$.


 \section{The SEB problem }
 \label{sec:seb}

The following is known as the smallest enclosing ball (SEB) problem \cite{seb_first_paper}:
given a finite subset $S \subseteq M$ of
some metric space $(M,d)$, find the smallest
ball $B_r(x), x\in M, r\in\Reals$ that contains $S$.
In this paper,
the goal is to solve the problem for $M= \feasible$ and $d= \lnorm{1}$
(see \autoref{sec:unknown-prior}).

\subsubsection*{Euclidean norm}
The problem is well-studied for $(\Reals^m, \lnorm{2})$ \cite{on_to_SEB}. It has been shown that
the solution is always unique and belongs to the convex hull of $S$. For any
fixed $m$, it can be found in linear time on $n = |S|$. However, the
dependence on $m$ is exponential\footnote{A sub-exponential algorithm
does exist, but still its complexity is larger than any polynomial.}.
Nonetheless, as we discuss below,  approximation algorithms also exist that
can compute a ball of radius at most $(1+\epsilon)r^*$, where $r^*$ is the optimal
radius and $\epsilon > 0$,
in time linear on both $m$ and $n$.

\subsubsection*{Manhattan norm}
For $(\Reals^m, \lnorm{1})$ the problem is much less studied.
The solution is \emph{no longer unique}, due to the fact that $\lnorm{1}$-balls
have straight-line segments in their boundary.
Moreover, somewhat surprisingly, \emph{none} of the solutions is guaranteed to be
in the convex hull of $S$.

Similarly to the Euclidean case, for fixed $m$ the problem can be solved in time linear on $n$,
using the isometric embedding of
$(\Reals^m, \lnorm{1})$
into $(\Reals^{2^m}, \lnorm{\infty})$. In contrast to the Euclidean case,
however, the problem can be solved in polynomial time on both $n$ and $m$ via
linear programming.

\subsubsection*{Probability distributions}
Our case of interest
is $(\feasible, \lnorm{1})$
for $\feasible\subseteq\distset\objectspace$,
the set of constrained probability distributions
over some set finite $\objectspace$, under the $\lnorm{1}$-distance.
Note that $\distset\objectspace$ is a subset of $\Reals^m$ for $m = |\objectspace|$.
However, solving the
$(\Reals^m, \lnorm{1})$-SEB problem does not immediately yield a solution for
$(\feasible, \lnorm{1})$-SEB, since the center of the optimal ball
might lie outside $\feasible$, or even outside $\distset\objectspace$.
This problem is studied in the following sections.

\subsection{Linear time solution for fixed $m$}

We start from the fact that the
$(\Reals^{d}, \lnorm{\infty})$-SEB problem admits a direct solution:
given a set $S \subset \Reals^{d}$, denote by $S^\top,S^\bot \in\Reals^d$ the vectors
of component-wise maxima and minima:
\begin{equation}\label{eq9853}
	S^\top_i = \max_{x \in S} x_i
	\qquad
	S^\bot_i = \min_{x \in S} x_i
	\qquad
	i \in \{1,\ldots,d\}
	~.
\end{equation}
It is easy to see
that the optimal radius is $r^* = \frac{1}{2}\|S^\top - S^\bot\|_\infty$, and the (non-unique) optimal
center is $x^* = \frac{1}{2}(S^\top + S^\bot)$.

Moving to the $(\Reals^m, \lnorm{1})$-SEB problem, we use a well-known embedding $\varphi: \Reals^m\to\Reals^{2^m}$
for which it holds that $\|\varphi(x) - \varphi(x')\|_\infty = \|x - x'\|_1$.
Using the fact that $\varphi$ is invertible, an optimal solution $(x^*, r^*)$
for $(\Reals^{2^m}, \lnorm{\infty})$-SEB
can be directly translated to an optimal solution
$(\varphi^{-1}(x^*), r^*)$ for $(\Reals^m, \lnorm{1})$-SEB.

Turning our attention to our problem of interest,
the  $(\feasible, \lnorm{1})$-SEB case is a bit more involved.
We can still use the same embedding $\varphi$, but
an optimal solution $(x^*, r^*)$ for $(\Reals^{2^m}, \lnorm{\infty})$-SEB cannot
be translated to our problem since
$\varphi^{-1}(x^*)$ is not guaranteed to be a probability distribution;
in fact no solution of radius $r^*$ is guaranteed to exist at all.
Writing $\varphi(S)$ for $\{ \varphi(x) \ |\ x\in S \}$,
essentially what we need is to solve the
$(\varphi(\feasible), \lnorm{\infty})$-SEB problem;
in other words to impose that the solution is the translation of a
feasible probability distribution.

The first step is to compute the vectors $\varphi(S)^\top, \varphi(S)^\bot$
of component-wise maxima and
minima for each translated vector.
Although we cannot directly construct the solution from these vectors
(as we did for $(\Reals^{d}, \lnorm{\infty})$-SEB), the
key observation is that these two vectors alone represent the maximal distance to the whole $S$,
because $\forall y\in \Reals^m$:
\begin{align*}
	&\max_{x \in S} \| y - x \|_1
	\Wide=
	\max_{x \in S} \| \varphi(y) - \varphi(x) \|_\infty
	\\
	&\Wide=
	\max \{ \| \varphi(y) - \varphi(S)^\top \|_\infty,  \| \varphi(y) - \varphi(S)^\bot \|_\infty\}
	~.
\end{align*}


Then, we exploit the fact that $\varphi$ is a \emph{linear map};
more precisely
\[
	\varphi(x) = x \Phi
	~,
\]
where $\Phi$ is a $m\times 2^m$ matrix, having one column for each bitstring $b$ of size $m$,
defined as $\Phi_{i,b} = (-1)^{b_i}$.
This allows us to solve the $(\feasible, \lnorm{1})$-SEB problem
via \emph{linear programming}: we use $x\in\Reals^m$ as variables,
imposing the linear constraints $x \in \feasible$.
Moreover, we ask to minimize the $\lnorm{\infty}$-distance between
$x\Phi$ and $\varphi(S)^\top,\varphi(S)^\bot$, two vectors that we have computed in advance.
The program can be written as:
\begin{align*}
	\text{minimize}\qquad z
	\\
	\text{subject to}\qquad x &\in \feasible
	\\
	z &\ge \phantom{-}\varphi(S)^\top_b - (x\Phi)_b	& \forall b \in \{0,1\}^m
	\\
	z &\ge - \varphi(S)^\bot_b + (x\Phi)_b & \forall b \in \{0,1\}^m
\end{align*}

Note that $\varphi(S)^\top,\varphi(S)^\bot$ can clearly be computed in $O(n)$ time.
Given these vectors, the whole linear problem does not depend on $n$ (because it does
not involve $S$). For fixed $m$, solving the linear program takes constant time, which
implies the following result.

\begin{theorem}
	The $(\distset\objectspace, \lnorm{1})$-SEB problem can be solved in
	$O(n)$ time for any fixed $m$.
\end{theorem}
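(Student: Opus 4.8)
The plan is to exhibit a concrete two-phase algorithm and bound its running time, reusing everything established above about the embedding $\varphi$. In the first phase the algorithm computes the two vectors $\varphi(S)^{\top},\varphi(S)^{\bot}\in\Reals^{2^m}$ of component-wise maxima and minima of the embedded points $\{\varphi(x):x\in S\}$. In the second phase it solves the linear program displayed just above with $\feasible=\distset\objectspace$, so that the constraint $x\in\feasible$ unfolds into the linear constraints $x_o\ge 0$ for every $o\in\objectspace$ together with $\sum_{o}x_o=1$; it then returns the optimal $x$ as the center of the enclosing ball and the optimal objective value $z$ as its radius.

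For correctness I would show that this linear program computes exactly $r^{*}=\min_{y\in\distset\objectspace}\max_{x\in S}\|y-x\|_1$ — the radius of the smallest $\lnorm{1}$-ball, centered at a probability distribution, that contains $S$ — together with a realizing center. The starting point is the identity proved above, $\max_{x\in S}\|y-x\|_1=\max\{\|\varphi(y)-\varphi(S)^{\top}\|_{\infty},\|\varphi(y)-\varphi(S)^{\bot}\|_{\infty}\}$ for every $y$, combined with the fact that $\varphi(y)=y\Phi$ depends linearly on $y$. The only step needing any care is to verify that, for each bitstring $b$, the pair of one-sided constraints $z\ge\varphi(S)^{\top}_{b}-(x\Phi)_{b}$ and $z\ge(x\Phi)_{b}-\varphi(S)^{\bot}_{b}$ is equivalent to $z\ge\max\{|(x\Phi)_{b}-\varphi(S)^{\top}_{b}|,\,|(x\Phi)_{b}-\varphi(S)^{\bot}_{b}|\}$. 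This follows from a three-case split on the position of $(x\Phi)_{b}$ relative to the interval $[\varphi(S)^{\bot}_{b},\varphi(S)^{\top}_{b}]$: in each case the larger of the two one-sided quantities equals the distance from $(x\Phi)_{b}$ to the farther endpoint, which is the larger of the two absolute values (using only $\varphi(S)^{\bot}_{b}\le\varphi(S)^{\top}_{b}$). Taking the maximum over all $b$ and minimizing over feasible $x$ then yields precisely $r^{*}$, and the minimizing $x$ is an SEB center.

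For the running time, phase one maps each of the $n$ points of $S$ to a point of $\Reals^{2^m}$ and takes coordinate-wise extrema, costing $O(n\cdot 2^m)$, which is $O(n)$ once $m$ is treated as a constant. The linear program produced in phase two has $m+1$ variables (the entries of $x$ and the scalar $z$) and $2^{m+1}+m+1$ constraints, and crucially it does not reference $S$ at all — it depends only on the two precomputed vectors. Hence, for fixed $m$, it is a linear program of constant size and is solved in $O(1)$ time (e.g. by any exact LP algorithm, or by Megiddo-style linear-time linear programming in fixed dimension). Summing the two phases gives the claimed $O(n)$ bound.

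I expect the elementary case analysis justifying the one-sided constraints to be the only step that is not an immediate consequence of the already-proved properties of $\varphi$; it is worth stating explicitly even though it is routine. It is also worth remarking that working with $\distset\objectspace$ rather than a general feasible set $\feasible$ makes no difference to the argument, since the probability simplex is itself cut out by finitely many linear (in)equalities, so the same linear program applies verbatim.
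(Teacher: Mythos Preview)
Your proposal is correct and follows essentially the same two-phase approach as the paper: precompute $\varphi(S)^\top,\varphi(S)^\bot$ in $O(n)$ time, then solve a constant-size linear program (for fixed $m$) that depends only on these two vectors. The one extra ingredient you spell out — the three-case argument showing that the two one-sided constraints per coordinate $b$ already capture $\max\{|(x\Phi)_b-\varphi(S)^\top_b|,|(x\Phi)_b-\varphi(S)^\bot_b|\}$ — is left implicit in the paper, so your write-up is slightly more detailed but otherwise identical in substance.
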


\subsection{Polynomial time solution for any dimension} \label{seb_exact}

In contrast to the Euclidean case, the dependence on $m$ for
$(\feasible, \lnorm{1})$-SEB is polynomial.
This is because the objective function $\max_{y \in S} \| x-y \|_1$,
        can be turned into a linear one using auxiliary variables.
We use variables $w_{y,i}$
to represent $|x_i - y_i|$, and a variable $z$
to represent $\max_{y \in S} \sum_i |x_i - y_i| = \max_{y \in S} \sum_i w_{y,i} $.

The linear program can be written as:
\begin{align*}
	\text{minimize}\qquad z
	\\
	\text{subject to}\qquad x &\Wide\in \feasible
	\\
	w_{y,i} &\Wide\ge x_i - y_i
		&& \forall y\in S, i\in\{ 1,\dots,m\} \\
	w_{y,i} &\Wide\ge y_i - x_i
		&& \forall y\in S, i\in\{ 1,\dots,m\} \\
	z &\Wide\ge \textstyle\sum_i w_{y,i}
		&& \forall y\in S
	\\
\end{align*}

\begin{theorem}
	The $(\feasible, \lnorm{1})$-SEB problem can be solved in
	polynomial time, using a linear program with $O(nm)$ variables and $O(nm)$ constraints.
	\label{thm:seb-exact}
\end{theorem}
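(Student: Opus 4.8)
The plan is to show that the linear program displayed immediately above faithfully encodes the $(\feasible,\lnorm1)$-SEB problem, and then to read off its size and invoke the polynomial-time solvability of linear programming.

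First I would restate the objective explicitly: writing $S$ for the finite set of given points, with $n=|S|$ and $m=|\objectspace|$, the $(\feasible,\lnorm1)$-SEB problem asks for
\[
	\min_{x\in\feasible}\ \max_{y\in S}\ \norm{x-y}_1
	\Wide=
	\min_{x\in\feasible}\ \max_{y\in S}\ {\textstyle\sum_{i=1}^m}\, |x_i-y_i|,
\]
where an optimal $x$ is the center of the ball and the optimal value its radius $r^*$. The correctness of the LP then rests on comparing optimal values in both directions. For ``$\ge$'': at any LP-feasible point, the pair of constraints $w_{y,i}\ge x_i-y_i$ and $w_{y,i}\ge y_i-x_i$ forces $w_{y,i}\ge|x_i-y_i|$, whence $z\ge\sum_i w_{y,i}\ge\sum_i|x_i-y_i|$ for every $y\in S$, i.e. $z\ge\max_{y\in S}\norm{x-y}_1$; since moreover $x\in\feasible$, the LP optimum is at least $r^*$. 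For ``$\le$'': given any $x\in\feasible$, setting $w_{y,i}:=|x_i-y_i|$ and $z:=\max_{y\in S}\sum_i|x_i-y_i|$ yields an LP-feasible point whose objective equals $\max_{y\in S}\norm{x-y}_1$, so the LP optimum is at most $r^*$. The two bounds give equality, and the $x$-part of an optimal LP solution is an SEB center. I would also note that the LP is feasible whenever $\feasible\neq\emptyset$ and its objective is bounded below by $0$, so an optimum is attained.

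It then remains to count. The variables are $x_1,\dots,x_m$, the scalar $z$, and the $nm$ auxiliary variables $w_{y,i}$ — that is $O(nm)$ variables; the constraints are the $2nm$ inequalities bounding the $w_{y,i}$, the $n$ inequalities $z\ge\sum_i w_{y,i}$, plus the linear description of $\feasible$, which contains at least $\sum_i x_i=1$ and $x_i\ge 0$ (so $O(m)$) and, in the applications of interest, stays within $O(nm)$ — giving $O(nm)$ constraints overall. Since a linear program of size polynomial in $n$ and $m$ can be solved in time polynomial in $n$ and $m$ (e.g. by the ellipsoid or an interior-point method), the theorem follows.

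I do not expect a genuine obstacle: this is the epigraph form of a min--max together with the standard linearization of absolute values. The only points that need care are the value-equivalence argument above — in particular, the $w_{y,i}$ settle to exactly $|x_i-y_i|$ only because we are minimizing, so one must argue at the level of optimal values rather than assuming equality at every feasible point — and keeping the $O(nm)$ bookkeeping honest by pinning down what the linear description of $\feasible$ contributes.
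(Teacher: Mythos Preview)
Your proposal is correct and follows exactly the same approach as the paper: the paper presents the same linear program and justifies it with the one-line observation that the objective $\max_{y\in S}\|x-y\|_1$ can be linearized via auxiliary variables $w_{y,i}$ representing $|x_i-y_i|$ and $z$ representing the outer $\max$. You simply spell out the two-direction optimal-value argument and the variable/constraint count that the paper leaves implicit, so there is nothing to add.
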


\subsection{Approximate solutions} \label{seb_approx}

The $(\feasible, \lnorm{1})$-SEB problem can be approximated
by solving the
$(\Reals^m, \lnorm{2})$-SEB problem for which several algorithms exist,
and then projecting the solution to $\feasible$.

For an exact (Euclidean) solution, there are known algorithms
claimed to handle dimensions up
to several thousands \cite{Fischer03fastsmallest-enclosing-ball}. Note that
an exact solution is unique and is guaranteed to lie within the convex hull of $S$.
Hence, when applied to distributions, the solution is guaranteed to be a
distribution.
Moreover, \autoref{prop-conv-hull-optimal} guarantees that if the solution
is feasible, it will be
optimal for an exact-guessing adversary, although it will not be optimal for
an $s$-distinguishing one.

Furthermore, there are several approximate algorithms that run in linear time or
even better (see \cite{DBLP:journals/corr/abs-1904-03796} for a recent work
which provides several references). Their solution is not guaranteed
to be a probability distribution, so a projection to $\feasible$ will be needed.

In the experiments
of \autoref{sec:exp}, we call 
\emph{SEB exact} the solution of \autoref{thm:seb-exact},
and \emph{SEB approx} the solution obtained via a linear-time approximation
algorithm.



\section{Use-case: Website Fingerprinting } \label{sec:exp}

In this section we apply the optimization methods described in previous sections to defend against Website Fingerprinting (WF) attacks and evaluate their performance. In a WF attack the adversary observes the encrypted traffic pattern between a user and a website and tries to infer which website the user is visiting. This is a particularly interesting attack when the adversary cannot directly observe the sender of the intercepted packets, for example when traffic is sent through the Tor network \cite{tor} for which a series of WF attacks have been proposed \cite{wf_tor_1 ,wf_tor_2}.

\subsection{Setup} \label{exp:setup}

Consider a news website covering ``controversial'' topics, prompting an
adversary to target it and attempt to identify its readers via WF.
To defend against such an attack,
the administrator of this website would like to make its responses as indistinguishable as possible from other news sites, so that WF becomes
harder.

For simplicity, in our evaluation we consider that only one request is
intercepted by the adversary and only the size of the encrypted response is
observed. The administrator's goal is to try to imitate other
websites  by producing pages that are similar in size.  Such target websites
produce responses according to distributions which are assumed to be known both
to the administrator and to the adversary. 

For our evaluation, we started by identifying
the top 5 (in traffic) news website from 40 countries\footnote{According to \url{www.similarweb.com}.}, leading to a total
of 200 sites, of which one is selected as the defended site $s$.
Then, we crawled these sites and measured the size of the received pages, rounded to the closest KB, creating a distribution over the page sizes. The biggest page size is 300KB, hence the output space of the distribution is 1KB, 2KB,.., 300KB.

Note that obtaining an accurate distribution of an \emph{average} request
requires knowledge of how the traffic is distributed across the different
pages of the site. For instance, 
the probability of visiting a page typically decreases as the user navigates deeper into the site: it is more likely for users to read 
the headlines on the homepage than to access a page several links deeper.

For our evaluation, 
we simulate visitors by randomly following 10 links of every page up to a
maximum depth of 4. A probability distribution over the page sizes is constructed
by assuming that a visitor access each depth with the following probabilities:
\begin{itemize}
    \item home page: 0.3
    \item 1-click depth: 0.25
    \item 2-clicks depth: 0.2
    \item 3-clicks depth: 0.15
    \item 4-clicks depth: 0.1
\end{itemize}
while the probability of accessing pages within the same layer is uniform.


Note that defending against WF when we can control  only a single site is quite challenging: the selected 200 sites are quite different from each other, so
we cannot imitate all of them simultaneously. Instead, we assume that the administrator of $s$ selects a moderate subset of 19 sites close in distance to $s$,
leading to a system of 20 secrets that we try to minimize its leakage.\footnote{The selected sites can be found in the \Cref{appendix:selected_sites}.}


To successfully hide $s$ among the other 19 sites, the administrator needs to find $q$ (i.e. a distribution over the page sizes) in order to reduce the leakage (or capacity) of $C$. Then, he needs to modify his site so that the served pages follow this distribution.

Note, however, that not all distributions $q$ are feasible for the defended site since the administrator still needs to respect the site's existing content.
While increasing a page size is usually straightforward via padding\cite{cherubin2017website}, decreasing it may not be feasible without affecting the content.
In the following experiments we take this issue into account by enforcing a \emph{non-negative padding} constraint $\feasible$, which is discussed below.


\subsubsection*{Priors}
Let us first describe the priors that we are about to use in the following experiments:

\begin{itemize}
    \item \emph{Uniform}: \textbf{u}, that is probability $1/20$ for each site.
    \item \emph{Traffic}: Based on the monthly visits of each website.
\end{itemize}

In practice, the adversary may have suspicions regarding the user's location. For example, a user living in the EU would probably not have a regular interest in reading the daily news from Brazil. 
Instead, she may frequently visit news websites from their own country or neighboring countries. 

To simulate this, considering that $s$ is a Romanian site, we create the following priors:

\begin{itemize}
    \item \emph{Eastern}: Proportional to the country's population if the country is in the Eastern Bloc of EU, 0 otherwise
    \item \emph{Ro-Slo}: Uniform if the site is hosted in either Romania or Slovakia, 0 otherwise.
    \item \emph{Ro-Hu}:  Uniform if the site is hosted in either Romania or Hungary, 0 otherwise.
\end{itemize}

\subsubsection*{Baseline Methods}\label{baseline_methods}
We are about to compare our proposed solutions with the following natural approaches:
\begin{itemize}
    \item \emph{No Defense}: Setting $q=C_s$.
    \item \emph{Average}: For each page size, calculate the average probability across all the other sites. Used only in the experiments for unknown prior. 
    \item \emph{Weighted \emph{Average} on prior $\pi$}: Similar to \emph{Average}, but now assign weights to each row based on the prior. 
    \item \emph{Copy}: Emulates another site, i.e. setting $q=C_{t'}$ for some site $t'$. If $\pi$ is known choose the $t'$ with the biggest $\pi_{t'}$. Otherwise, choose the $t'$ that offers the minimum capacity.
    \item \emph{Pad}: Pad each page size deterministically to the next multiple of 5KB. 
\end{itemize}

\subsubsection*{Feasible Solutions} \label{exp:feasible_solution}

As previously discussed, when searching for an optimal distribution $q$ of page sizes, we need to take into account that not all distributions are feasible in practice,
which is expressed by enforcing feasibility constraints $\feasible$.
In our use case, the constraints arise from the fact that page size can be easily increased via padding, but not decreased.
For example, say that $q$ dictates that we should produce a page size of 5KB with probability 0.2. If all actual pages of our site are 10KB or larger, then producing a page of 5KB with non-zero probability is impossible.

 To apply the optimization methods of previous sections we need to express $\feasible$ in terms of linear inequalities, which is done as follows.
Let $\hat{q}$ be the site's \emph{original} size distribution,  without applying any defense.
To express our non-negative padding constraint, we create a matrix $T_{o,o'}$, denoting the probability
that we move from $o$ to $o'$ for all pairs $o,o' \in\objectspace $. Then, a solution $q$ is feasible iff it can obtained from $\hat{q}$ via
a transformation table $T$ that only moves probabilities from smaller to larger observables,
which is expressed by the following linear constraints:
\begin{align*}
    T_{o,o'} &\ge 0  & \forall{o, o'}   \\
    T_{o,o'} &= 0   & \forall{o > o'}     \\
	\hat{q}_o &= \textstyle\sum_{o'}{T_{o,o'}} &\forall o \\
    q_o &= \textstyle\sum_{o'}{T_{o',o}}  &\forall o 
\end{align*}
This approach not only expresses the feasibility of $q$, but also provides the matrix $T$ which can be directly used as a padding strategy.
When we receive a request for a page with size $o$, we can use the row $T_{o,\cdot}$ (after normalizing it),
as a distribution to produce a padded page. The constraints guarantee that all sizes produces by that distribution will not be smaller than $o$.


\subsection{Experiments for a fixed prior $\pi$}
This section evaluates the solutions discussed in Section \ref{sec:fixed-prior}.

Starting from the \emph{s-distinguishing adversary}, recall that 
the weighted (by the prior) average of the other rows $q = \WtoS_{\neg s}C$ is guaranteed to have no leakage (\autoref{prop-conv-hull-optimal}).
In our experiment, however, this simple solution cannot be directly applied  since it violates the feasibility constraints.
Still, we can use the LP solution of \autoref{optimazation_prop2} to obtain the optimal feasible solution.
The results are shown in Table \ref{table:fixed-prior-s-dist}; we can see that, although $q = \WtoS_{\neg s}C$
itself is not feasible, we can actually find a feasible solution with no leakage at all in all cases,
except from Ro-Hu in which the leakage is slightly larger than 1. 

\begin{table}[b] 
  \centering
    \caption{Leakages for each prior using Proposition \ref{optimazation_prop2} (known $\pi$, s-distinguishing adversary)}
  \begin{tabular}{@{}cc@{}}
    \toprule
    Prior  & Leakage\\
    \midrule
    Uniform & 1 \\ 
 \hline
 Traffic & 1 \\ 
 \hline
  Eastern & 1 \\ 
   \hline
 Ro-Slo & 1 \\ 
  \hline
  Ro-Hu & 1.03 \\ 
   
    \bottomrule
  \end{tabular}

   \label{table:fixed-prior-s-dist}
\end{table}

\begin{table*}[h]
  \centering
    \caption{Leakage and posterior vulnerability (in parenthesis) for each method and prior (known $\pi$, exact-guessing adversary)}
  \begin{tabular}{@{}ccccccc@{}}
    \toprule
    Method  & Uniform prior   & Traffic prior & Eastern prior & Ro-Slo prior & Ro-Hu prior \\
    \midrule
      \rowcolor{gray!30}
     Optimal (Prop. \ref{optimazation_prop2}) &  8.78 (0.44) & 2.29 (0.61) & 1.01 (0.44) & 1.75 (0.58) & 1.04 (0.52) \\

    No Defense  & 9.16 (0.46) & 2.29 (0.61) & 1.69 (0.74) & 2.40 (0.80) & 1.76 (0.88)\\
   
    Weighted Average & 9.02 (0.45) & 2.29 (0.62) & 1.21 (0.53) & 1.90 (0.63) & 1.04 (0.52) \\
    
    Copy  & 8.8 (0.44) & 2.29 (0.61) & 1.43 (0.63) & 1.78 (0.59) & 1.04 (0.52)\\
    
    Pad & 9.26 (0.46) & 2.29 (0.61) & 1.86 (0.81) & 2.55 (0.85) & 1.89 (0.94) \\
   
    \bottomrule
  \end{tabular}

  \label{table:fixed-prior-exact-leak-vuln}
\end{table*}

Moving to the \emph{exact-guessing adversary}, we again use the LP of \autoref{optimazation_prop2} to find
the optimal solution (since the     weighted average $q = \WtoS_{\neg s}C$ is neither feasible nor optimal).
\Cref{table:fixed-prior-exact-leak-vuln} shows the leakage for each prior and method, as well as
the \emph{posterior vulnerability} (in parenthesis) which is helpful to interpret the results.
For instance, the optimal solution for the Ro-Slo prior gives a leakage of 1.75 and posterior vulnerability
of 0.58, meaning that the adversary can guess the secret with probability $\nicefrac{0.58}{1.75} = 0.33$
a priori, and his success probability increases to $0.58$ after observing the output of the system.
Note that this adversary is much harder to address (by controlling only a single row of the channel) than the s-distinguishing,
hence it is impossible to completely eliminate leakage in most cases, but we can still hope for a substantial improvement
compared to having no defense at all.

The results show that our approach offers the least leakage and the smallest posterior vulnerability across all priors. Among the other options, the natural choice of (projected) \emph{Weighted Average} offers comparable results for some priors (Traffic, Ro-Hu) , but notably worse for others (for example $\approx 20\%$ more leakage on the Eastern prior and $\approx 10\%$ more leakage on the Ro-Slo prior). 

\emph{Copy} seems to be an interesting alternative for some priors but offers a solution with increased leakage on the  case of the Eastern prior (1.43 over  1.01 of \Cref{optimazation_prop2}).
This prior is the only one in which $s$ has a significantly larger probability (0.43) than any other site $t$; in this case, if we can make the evidence of any
observation $o$ smaller than our a priori belief, that is $\nicefrac{C_{t,o}}{C_{s,o}} \le \nicefrac{\pi_s}{\pi_{t}}$, then the rational choice would be
to guess $s$ for any observation $o$, and the system would have no leakage at all. This is indeed achieved by the optimal solution.
If we choose to \emph{Copy}, however,  the site $t$ with the largest (other than $s$) prior, we might not necessarily achieve this goal. This solution does make $s$ and
$t$ indistinguishable, but we might now produce certain observations with very small probabilities, making $s$ and $t$ distinguishable from some of the the remaining
sites, which explains why \emph{Copy} performs worse than \emph{Optimal} for this prior.


Another somewhat surprising observation is that padding turns out to be more \emph{harmful} than no defense at all.
The reason is that padding relies on every site using it simultaneously, so that different size observations become
identical when mapped to the same padded value. However, if only $s$ pads, then its observations can become
even more distinguishable than before, since only that site will always report sizes that are multiples of 5KB.


\subsection{Experiments for an unknown prior $\pi$}
This section aims to assess the solutions presented in Section \ref{sec:unknown-prior}.

\subsubsection*{exact-guessing adversary}
We discussed that any convex combination of the remaining rows of $C$ will yield the same capacity, provided that the solution is feasible, since each column maximum is being retained. But a convex combination that complies with the constraint might not exist at all. To overcome the hurdle,  we can use LP (Proposition \ref{prop:unknown-exact-lp}).

\Cref{table:unkown-prior-exact} shows that the \emph{Optimal (Prop. \ref{prop:unknown-exact-lp})} offers the best capacity of 8.78. In fact, any other method that is a convex combination of the remaining rows (i.e. \emph{Average}, \emph{Copy}) would have had the same result, but the constraints led to a slightly increased capacity. Observe that for the exact guessing adversary this capacity is the same as the leakage on the uniform prior (\Cref{table:fixed-prior-exact-leak-vuln}), which is the worst prior (for the defender) for the exact guessing adversary, as discussed in  \Cref{prelim}.

Nonetheless, \Cref{sec:unknown-s-dist} discussed that we could also use the solution of SEB (for the unconstrained case). Even though constraints do exist in this experiment, \Cref{table:unkown-prior-exact} shows that we get the same capacity as the \emph{Optimal (Prop. \ref{prop:unknown-exact-lp})}. Interestingly, neither SEB nor Proposition \ref{prop:unknown-exact-lp} provide a solution that belongs to the convex hull of \(C\).

\begin{table}[h] 
  \centering
    \caption{Capacity for each method (unknown $\pi$, exact-guessing adversary)}
  \begin{tabular}{@{}ccc@{}}
    \toprule
    Method  & Capacity \\
    \midrule
      \rowcolor{gray!30}
    Optimal  (Prop. \ref{prop:unknown-exact-lp})  & 8.78 \\ 
 \hline
 \rowcolor{gray!30}
    SEB exact  & 8.78 \\
    \hline 
 No Defense  & 9.14 \\ 
\hline
Average &  9.02 \\ 
 \hline
 Copy & 8.8  \\ 
 \hline
 Pad & 9.25  \\ 
    \bottomrule
  \end{tabular}

   \label{table:unkown-prior-exact}
\end{table}

\subsubsection*{s-distinguishing adversary}

 Recall that we discussed why the problem reduces to SEB, offering two solutions, an exact but slow one (\emph{SEB exact}) and a faster approximation (\emph{SEB approx.}) resp. in  \Cref{seb_exact} and  \Cref{seb_approx}.

\Cref{table:unkown-prior-s_dist} shows that \emph{SEB exact} offers the best capacity: it decreases the capacity from $\approx 1.8$ (\emph{No Defense}) to $1.56$. \emph{Average} does show some improvement over \emph{No Defense}, while \emph{Pad} yields results that are too easily distinguishable. On the other hand, \emph{SEB approx.} is technically the second-best choice, although the result is nearly identical to \emph{Average}. 

\begin{table}[h] 
  \centering
    \caption{Capacity for each method (unknown $\pi$, s-distinguishing adversary)}
  \begin{tabular}{@{}cc@{}}
    \toprule
    Method  & Capacity \\ 
    \midrule
    \rowcolor{gray!30}
    SEB exact  & 1.56 \\ 
    \hline
    \rowcolor{gray!12}
    SEB approx.  & 1.653 \\  
    \hline
    No Defense   & 1.79 \\ 
    \hline
    Average & 1.659 \\ 
    \hline
    Copy & 1.79 \\ 
    \hline
    Pad & 1.9 \\ 
    \bottomrule
  \end{tabular}
  \label{table:unkown-prior-s_dist}
\end{table}

\subsection{Attack Simulation} \label{exp:attack_sim}
In this section we simulate an actual attack, to estimate the attacker's accuracy, using the s-distinguishing adversary, as it is directly applicable in the WF scenario. In a real-world attack the defender will probably be oblivious for the attacker's $\pi$ and hence we use the solutions for the unknown prior. 

We train a Random Forest classifier to estimate the target site among the others based on the observed page size, sampling pages from each site $t$ according to $C_t$.
While this scenario involves an unknown prior, it is essential to establish one solely for the attacker, in order to decide how many pages to sample from each site. In other words, for each site $t$ the number of sampled pages depends on $\pi_t$. For $s$, we set $\pi_s = 0.5$ for all the following experiments, to capture the boolean nature of this adversary. Note that $\pi$ is only known to the adversary; it is not disclosed to the defender, who for that reason uses prior-agnostic methods.

In order to train the classifier for $s$, we begin by requesting page sizes according to $C_s$ and record the actually reported page sizes from the server (derived from $T$), which are then used in the training process. Essentially, this simulates the behavior of the server of $s$; a user requests a page (selecting not uniformly randomly, but according to $C_s$), then the server calculates the page size $i$, pads it according to $T_i$, and finally sends the padded page back to the user.

Finally, as per standard procedure, $80\%$ of the data is used for training and the remaining $20\%$ for testing.

\begin{figure*}[h]
\centering
\begin{subfigure}{.42\textwidth}
  \centering
  \includegraphics[width=\linewidth]{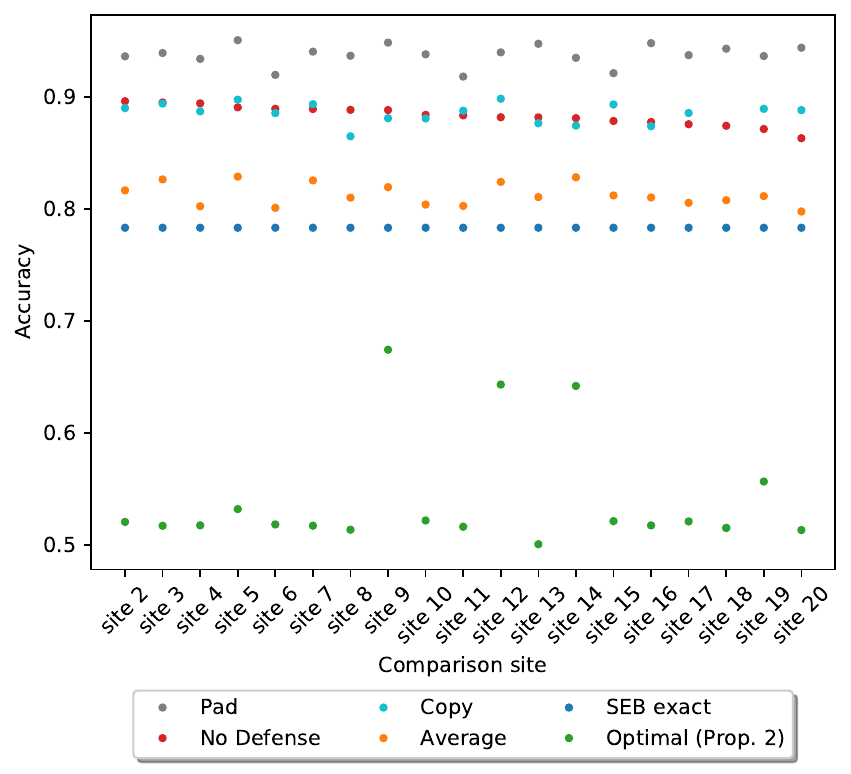}
  \caption{1-on-1: prior is $0.5$ on $s$ and $0.5$ on another site \newline }

  \label{fig:accuracy_by_duel}
\end{subfigure}%
\begin{subfigure}{.42\textwidth}
  \centering
  \includegraphics[width=\linewidth]{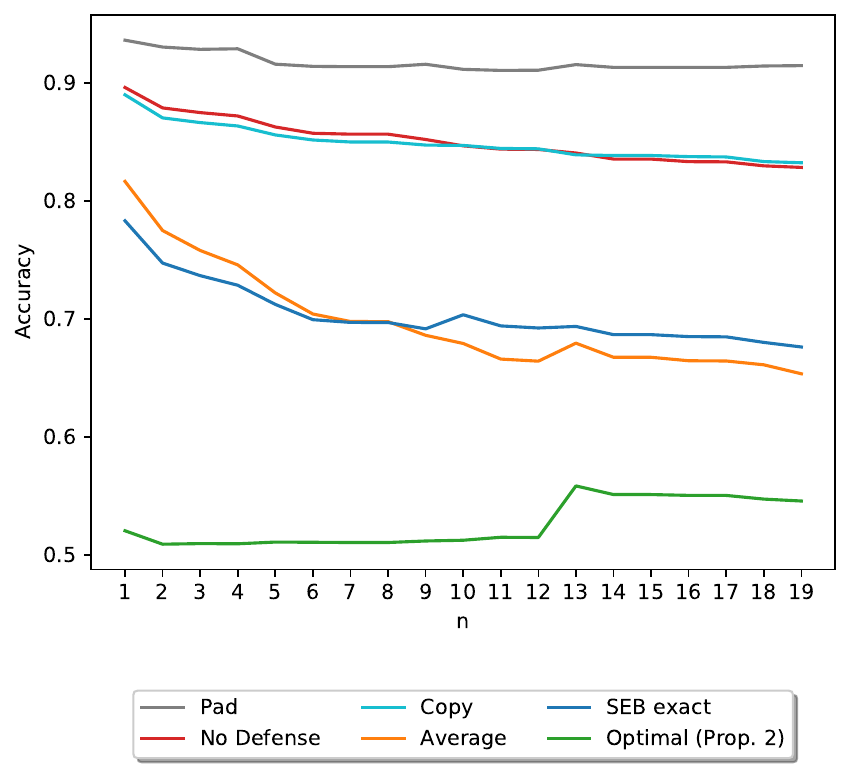}
  \caption{1-on-n: prior is $0.5$ on $s$ and the remaining $0.5$ is divided across $n$ sites according to their traffic }
  
  \label{fig:accuracy_by_visits}
\end{subfigure}
\caption{Attacker's accuracy} 
\label{fig:50_50_prior}
\end{figure*}

Note that having a prior where $\pi_s > 0.5$ would essentially decrease the attacker's gains in information. In that case, the leakage of the system is not particularly interesting to the attacker, who already has enough information (as an example, consider the extreme $\pi_s = 1$).

Let us first examine the worst possible case, involving the highest possible posterior vulnerability, derived from the capacity. Intuitively, the worst case will be when the prior is evenly divided between two sites: $s$ and the one that differs the most from $s$. 

\begin{table}[b] 
  \centering
    \caption{Accuracy of the WF attack, for each method, when an s-distinguishing adversary possesses the worst (for the defender) possible $\pi$}
  \begin{tabular}{@{}cccc@{}}
    \toprule
    Method  & Accuracy & Recall & F1 score \\ 
    \midrule
    \rowcolor{gray!30}
    SEB exact  & 0.78 & 0.8 & 0.8 \\ 
    \hline
    \rowcolor{gray!12}
    SEB approx.  & 0.83 & 0.78 & 0.82 \\  
    \hline
    No Defense   & 0.89 & 0.88 & 0.89 \\ 
    \hline
    Average & 0.83 & 0.8 & 0.83 \\ 
    \hline
    Copy & 0.89 & 0.89 & 0.9 \\ 
    \hline
    Pad & 0.95 & 0.99 & 0.95 \\ 
    \bottomrule
  \end{tabular}
  \label{table:accuracy_worst_case}
\end{table}

For this omnipotent adversary equipped with the worst (for the defender) prior, \Cref{table:accuracy_worst_case} shows his accuracy  \footnote{Note that the accuracy could have also been calculated directly from \Cref{table:unkown-prior-s_dist} by simply multiplying the capacity by $\pi_s = 0.5$, as the capacity captures the worst possible leakage.}. The \emph{No Defense} gives a $89\%$ accuracy while \emph{SEB exact} offers an $11\%$ decrease in accuracy. All other methods offer at best only about half of that, with the best possible alternative to be \emph{Average} with an accuracy of $83\%$. \emph{Pad} is again worse than \emph{No Defense}, offering a staggering $95\%$ accuracy to the attack.

While in this first experiment we allocated the remaining $50\%$ of the prior to the most different site to capture the worst possible case, it will also be interesting to explore other ways to distribute it.
We examine two cases in \Cref{fig:50_50_prior} where we also include as a baseline the optimal solution for a  particular prior (Proposition \ref{optimazation_prop2}) but recall that it cannot be used in this scenario by the defender who does not have access to $\pi$.

First, we do a \emph{1-on-1} comparison for each site $t \in C \setminus\{s\}$ by creating a prior  $\Uni^{s,t}$ (i.e. we split the prior evenly between $s$ and $t$). \Cref{{fig:accuracy_by_duel}} shows that \emph{SEB exact} offers worse accuracy for the attacker compared to the \emph{Average} on every such 1-on-1 comparison. \emph{Copy} is close to \emph{No Defense} while \emph{Pad} is again the worst option in every case; the classifier can easily distinguish a site that produces page sizes that are multiples of 5KB.

It will also be interesting to keep $\pi_s = 0.5$ and split the remaining $50\%$ to some other $n$ sites, not uniformly but according to their traffic, making a \emph{1-on-n} comparison. 
\Cref{fig:accuracy_by_visits} shows that \emph{SEB exact} performs better in the worst case. Note that when $n=13$, the site with the most visits comes into play, notably affecting the prior \footnote{That is because the entire prior is normalized each time to ensure that its sum is 1.}.

On the other hand, \Cref{fig:accuracy_by_visits}  shows that \emph{Average} performs slightly better than \emph{SEB exact} when $n$ is increased.
To understand this, recall that in this scenario the optimal choice would  be a Weighted Average based on the specific prior. But \emph{Average} essentially assigns the same weight to each row of $C$, regardless of $n$, as it is prior-agnostic. However, as $n$ increases, more rows have a non-zero prior, favoring \emph{Average} since its uniform weights happen to work well with this particular prior. Also, keep in mind that the same heuristics were used to sample each site as discussed in \Cref{exp:setup}. This gives another advantage to \emph{Average}; consider, informally, that it attempts to find the average of similar-looking items. In reality, the way a user explores a site might differ for each site, making it difficult to generalize the performance of \emph{Average}, in contrast to \emph{SEB exact} which we proved to be the best option for the worst possible prior.

\subsection{Comparison of SEB exact and  SEB approx.}

\begin{figure*}[h]
\centering
\begin{subfigure}{.42\textwidth}
  \centering
		\includegraphics[width=\columnwidth]{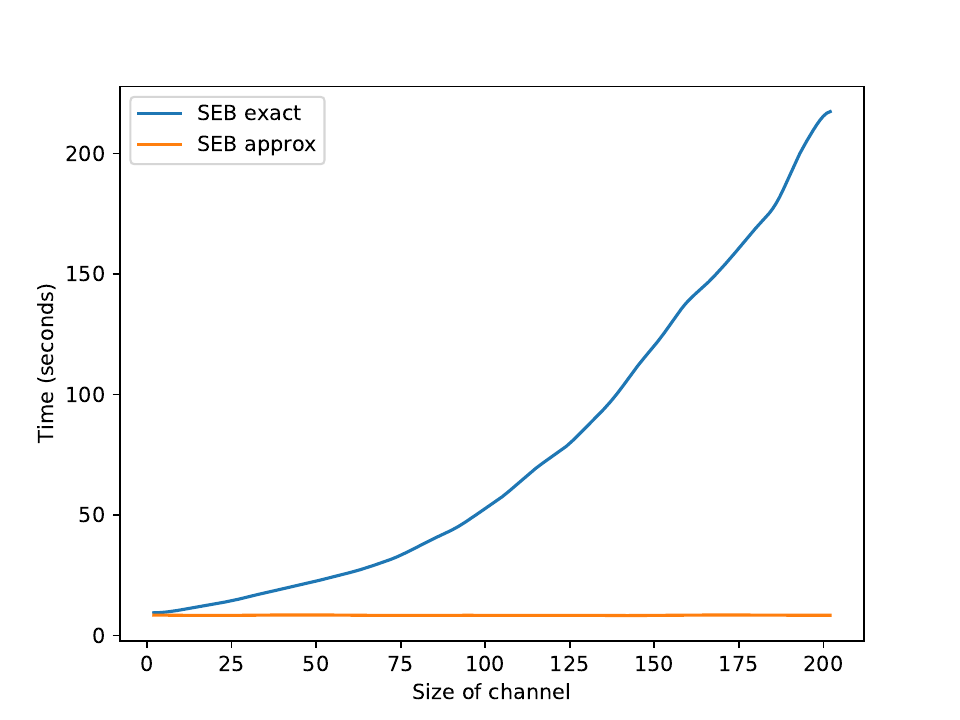}%
		\caption{Runtime}
		\label{fig:seb_timing}
\end{subfigure}%
\begin{subfigure}{.42\textwidth}
		\centering
		\includegraphics[width=\columnwidth]{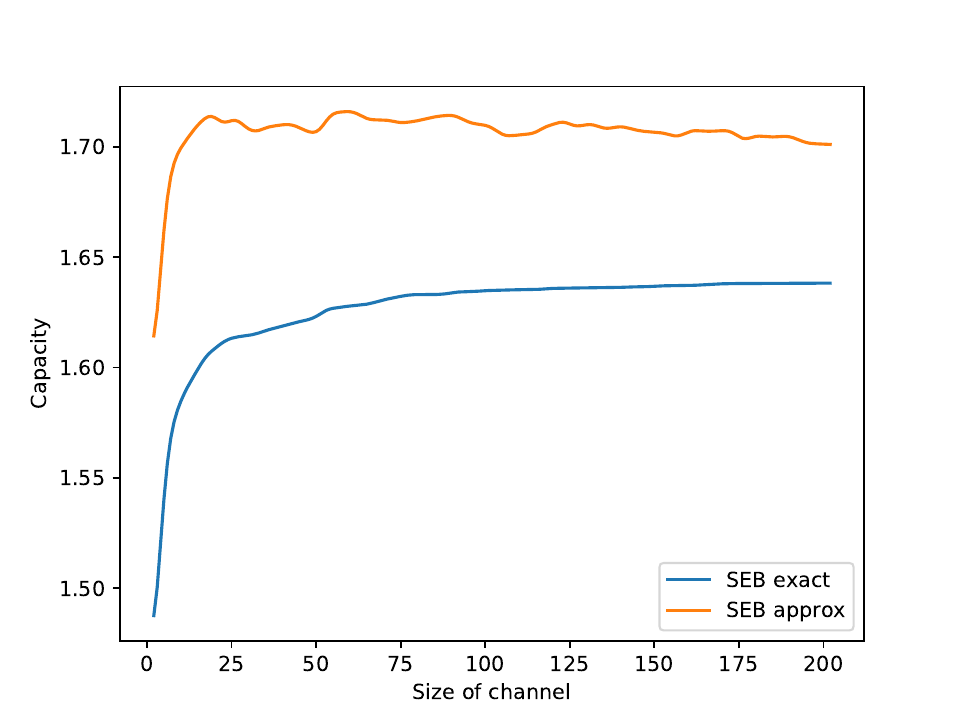}%
		\caption{Capacity}
		\label{fig:seb_capacity}
\end{subfigure}
\caption{Comparison of \emph{SEB exact} and \emph{SEB approx.}} 
\label{fig:seb_approx_vs_exact}
\end{figure*}

Previously, we discussed the trade-off between the two methods: the first offers optimal results at a high computational cost, while the latter provides approximated results quickly. 

To illustrate the comparison, we conduct an experiment to measure the performance and runtime of the two methods as the size of the channel increases \footnote{The system specifications used in the experiments can be found in \Cref{appendix:sys_spec}.}.
In this experiment, we employ all 200 sampled sites, instead of only the 20 sites used in the previous experiments. 

Figure \ref{fig:seb_timing} shows that \emph{SEB approx.} always completes almost instantly. On the contrary, Figure \ref{fig:seb_capacity} shows that as the channel size increases, \emph{SEB approx.} achieves a capacity near 1.7, compared to 1.6 for \emph{SEB exact}. On the other hand, the runtime of \emph{SEB exact} scales polynomially, even taking more than 3 minutes to complete when all the 200 sites are featured.

For smaller channel sizes, such as the one used in the previous experiments, choosing \emph{SEB exact} seems rather obvious; the runtime delay is insignificant while the improvement in capacity is remarkable.

\subsection{Scalability of Proposed Solutions}

\begin{figure*}[t]
\centering
\begin{subfigure}{.44 \textwidth}
  \centering
  \includegraphics[width=\linewidth]{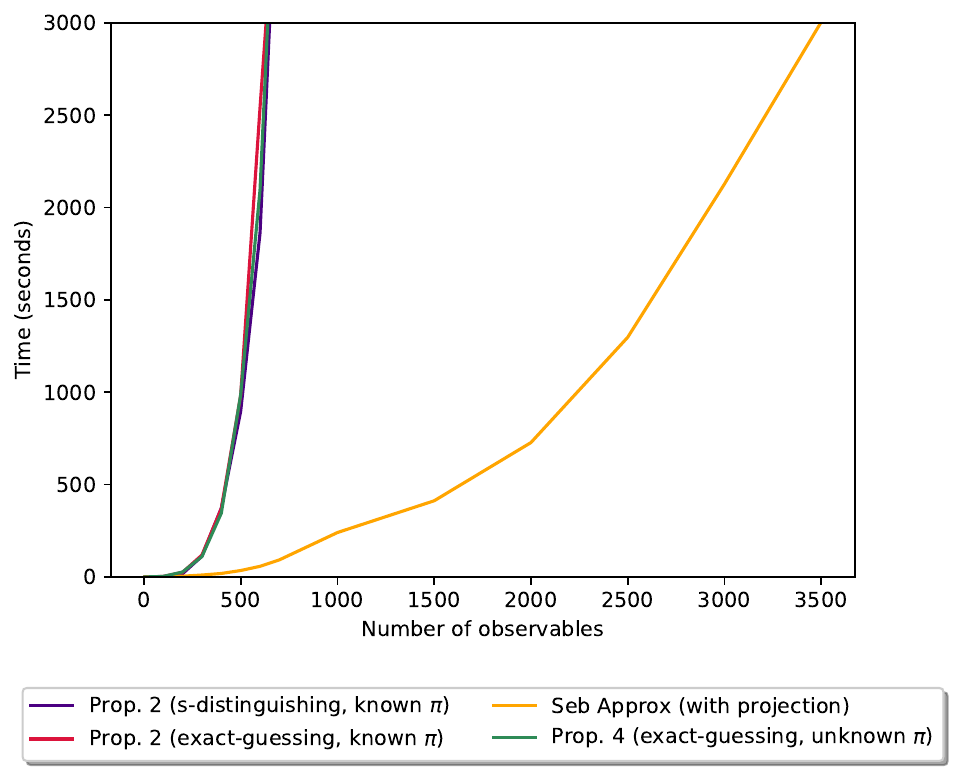}
  \caption{Prop. 2, Prop. 4 and SEB approx.}

  \label{fig:timing_rest}
\end{subfigure}%
\begin{subfigure}{.42 \textwidth}
  \centering
  \includegraphics[width=\linewidth]{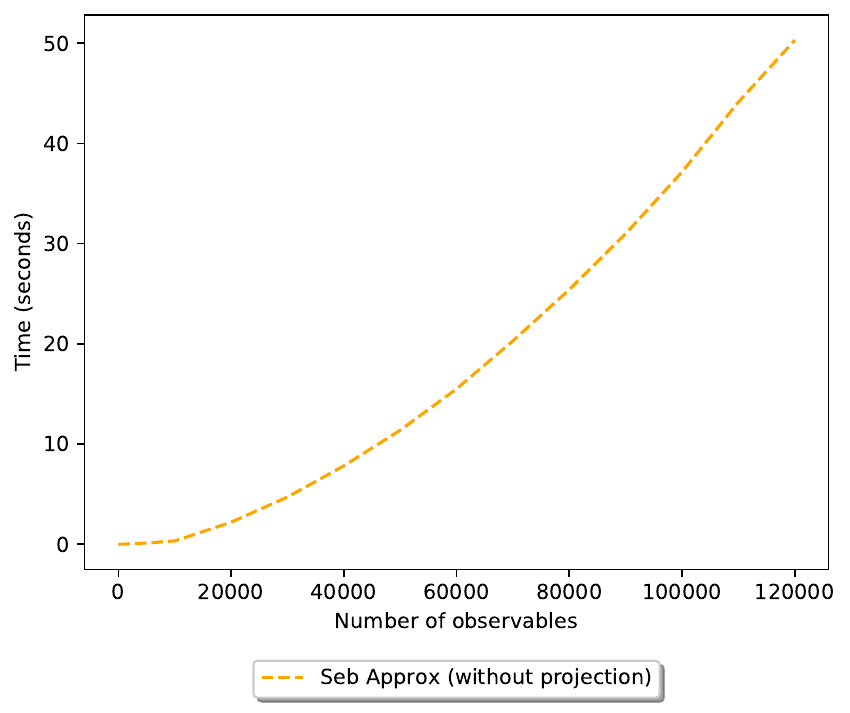}
  \caption{SEB approx.  without computing the projection to $\feasible$ }
  \label{fig:timing_seb_no_proj}
\end{subfigure}
\caption{Computation time of the proposed solutions as a function of observable attributes} 
\label{fig:50_50_prior}
\end{figure*}

After showing the better scalability of \emph{SEB approx.} (compared to \emph{SEB exact}), the next natural question is how the other proposed solutions scale. Indeed, this might be a primary concern for the defender if they wish to apply these solutions in a scenario with more observables.

\Cref{fig:timing_rest} shows the computation time of the corresponding LPs of \Cref{optimazation_prop2}, \Cref{prop:unknown-exact-lp}, and \emph{SEB approx.}; the latter is scaling much better than the first two, which behave similarly.

Recall, however, that \emph{SEB approx.}  consists of two steps: a) finding the approximated solution and b) projecting it into $\feasible$ via LP. Intuitively, the second step is more costly, which raises the question about the performance of \emph{SEB approx.}  in the unconstrained case. \Cref{fig:timing_seb_no_proj} shows that not computing the projection makes \emph{SEB approx.}  scale even better, as it computes a solution in less than a minute for even $120.000$ observables.

This can be useful to the defender if they are completely free to design their own row (e.g. creating a website from scratch). Moreover, if the defender can sacrifice some level of protection, they can boost their performance by computing the unconstrained case (i.e. without performing the costly projection) and then produce a result by sampling iteratively from the distribution until the constraints are met. For instance, if the defender has a page that is 100KB, they can iteratively sample until they get a number larger than 100, since everything less than 100 should be discarded (assuming that the page cannot be further compressed).

\section{DISCUSSION AND CONCLUSIONS}

In this work, we explored methods for adding a new row to an existing information channel under two distinct types of adversaries for both known and unknown priors. 

When the prior is known, we discussed that the natural approach for an s-distinguishing adversary, namely an adversary that tries to distinguish if the secret is $s$ or not, falls short of achieving optimal results for the exact-guessing adversary (which tries to guess the exact secret in one try). In that case, LP can be used to provide an optimal solution.

We argued however that in real-world applications, the prior information available to an adversary is often difficult to estimate. Thus, when designing secure systems, one should also consider the worst-case scenario, or equivalently, in QIF terms, one should seek to minimize the capacity. Note that, when we use capacity for the comparison, the improvement is usually smaller than in the case of the leakage for a known prior. This is because the capacity represents the maximum leakage over all priors. Hence the capacities of the other approaches are ``squeezed'' between our capacity and the maximum possible capacity of a channel (system) with the same number of secrets and observables as the given one. 

Therefore, guided by minimizing capacity, we showed that for an unknown prior, any convex combination of the remaining rows is sufficient for the exact-guessing adversary. However, for the s-distinguishing adversary, solving the SEB (Smallest Enclosing Ball) problem is necessary to find a solution with optimal capacity, although it requires polynomial time.



Furthermore, we explained how our techniques can be applied to defend against website fingerprinting, specifically by discussing how a site can pad its responses to comply with our proposed solutions. We conducted experiments demonstrating that our approach can significantly reduce the leakage, compared to other natural methods. 
Then, we simulated an actual attack by training an ML classifier for the s-distinguishing adversary and an unknown prior. 

Our experiments confirm that solving the SEB problem ensures the lowest accuracy for the s-distinguishing adversary in the worst-case scenario compared to all the other prior-agnostic methods. Still the accuracy appears to be high because we considered an attacker who already knows that $s$ has a probability of $0.5$ to be the secret, in order to capture the capacity. This type of attacker is arguably uncommon in real-world situations. Although the abundance of information in today’s information age makes it nearly impossible to know the attacker’s prior knowledge, with our approach the defender can be prepared for the worst-case scenario.

We remark, however,  that the results presented in \Cref{sec:exp} depend on the assumptions taken during the design of the WF (Website Fingerprinting) attack simulation. The probability of each page at a given click-depth determines the behavior of each site. Informally, the more distinguishable the sites are, the harder it is to find a solution, and the benefit of the optimal solution, compared to other naive methods, may be reduced. Consider, for example, the extreme case where each website contains only a single page (e.g. appears with probability 1), which is unique in size. In this scenario, hiding s among a set of completely distinguishable sites becomes even more challenging.

Another practical aspect of the problem is the constraints faced when designing a defense against WF attacks. 
In this work, we considered only the page size, showing how linear constraints can be used. These constraints can similarly be applied to the packet size, which has been shown to be the most valuable information for a WF attack \cite{wb_features}.
However in real-world applications, one might want to include other parameters as well to increase the level of protection. Some natural choices are packet numbers, timing, and burst sizes, but an adversary can boost their WF attack by gathering information from other attributes, which can be as many as  $35683$ \cite{wb_features}.
This plethora of attributes raises the question of whether all of them can be expressed via linear constraints.
In cases where this is not possible, one potential approach would be to first find an unrestricted solution and then try to project it into the subspace of \lnorm{1} where the constraints are met. This approach was  discussed in the ''Feasible Solutions'' paragraph of 
\Cref{exp:feasible_solution} and we intend to explore it formally in future research. 


Future work should also continue by measuring the efficiency of our approaches in complex real-world  WF attacks, such as those studied in the literature for the Tor Network. Additional use cases could also be explored, as the scope of applications extends to any scenario in which a new user joins a fixed system and seeks privacy by designing their own responses based on the (fixed) responses of others. Finally, another potential research direction involves searching for a solution that is simultaneously optimal for both adversaries in the unknown prior setting while respecting any class of constraints.

\bibliography{references}
\bibliographystyle{IEEEtran}

\appendix
%
\newenvironment{Reason}{\begin{tabbing}\hspace{2em}\= \hspace{1cm} \= \kill}
{\end{tabbing}\vspace{-1em}}
\newcommand\Step[2] {#1 \> $\begin{array}[t]{@{}llll}\displaystyle #2\end{array}$ \\}
\newcommand\StepR[3] {#1 \> $\begin{array}[t]{@{}llll}\displaystyle #3\end{array}$
\` {\RF \makebox[0pt][r]{\begin{tabular}[t]{r}``#2''\end{tabular}}} \\}
\newcommand\WideStepR[3] {#1 \>
$\begin{array}[t]{@{}ll}~\\\displaystyle #3\end{array}$ \`
{\RF \makebox[0pt][r]{\begin{tabular}[t]{r}``#2''\end{tabular}}} \\}
\newcommand\Space {~ \\}
\newcommand\RF {\small}

\subsection{Proofs of Section~\ref{sec:guessing-predicates}}

We start with two auxiliary lemmas.

\begin{lemma}\label{lem:maxdist-convex-hull}
	For any $S,T \subseteq \Reals^n$, it holds that
	\[
		\MaxDist{\ConvHull{S}}{\ConvHull{T}} \Wide= \MaxDist{S}{T}
		~,
	\]
	where distances are measured wrt any norm $\|\cdot\|$.
\end{lemma}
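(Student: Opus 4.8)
The plan is to prove the two inequalities separately. Since $S \subseteq \ConvHull{S}$ and $T \subseteq \ConvHull{T}$, the supremum defining $\MaxDist{\ConvHull{S}}{\ConvHull{T}}$ ranges over a superset of the pairs considered in $\MaxDist{S}{T}$, so the inequality $\MaxDist{\ConvHull{S}}{\ConvHull{T}} \ge \MaxDist{S}{T}$ is immediate and requires no work.

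For the reverse inequality I would fix arbitrary points $x \in \ConvHull{S}$ and $y \in \ConvHull{T}$ and, by definition of the convex hull, write them as \emph{finite} convex combinations $x = \sum_{i} \lambda_i s_i$ and $y = \sum_j \mu_j t_j$, with $s_i \in S$, $t_j \in T$, $\lambda_i, \mu_j \ge 0$ and $\sum_i \lambda_i = \sum_j \mu_j = 1$. The one trick is to re-express both points over the common index set $\{(i,j)\}$: since $\sum_j \mu_j = 1$ we have $x = \sum_{i,j} \lambda_i \mu_j\, s_i$, and symmetrically $y = \sum_{i,j} \lambda_i \mu_j\, t_j$. Using that $\sum_{i,j} \lambda_i \mu_j = 1$ and the triangle inequality for the norm,
\begin{align*}
	\|x - y\|
	&\Wide= \Bigl\| \textstyle\sum_{i,j} \lambda_i \mu_j (s_i - t_j) \Bigr\|
	\Wide\le \textstyle\sum_{i,j} \lambda_i \mu_j\, \| s_i - t_j \|
	\\
	&\Wide\le \textstyle\sum_{i,j} \lambda_i \mu_j\, \MaxDist{S}{T}
	\Wide= \MaxDist{S}{T}
	~.
\end{align*}
Taking the supremum over all $x \in \ConvHull{S}$ and $y \in \ConvHull{T}$ yields $\MaxDist{\ConvHull{S}}{\ConvHull{T}} \le \MaxDist{S}{T}$, which together with the first inequality proves the claim.

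Honestly there is no real obstacle here: the whole content is the ``double convex combination'' bookkeeping that bounds $\|x-y\|$ by a convex combination of the pairwise distances $\|s_i - t_j\|$. The only points deserving a word of care are that convex combinations may be taken finite (so the manipulations are purely algebraic, with no limiting argument needed) and that $\sum_{i,j}\lambda_i\mu_j = 1$, which is exactly what makes the final bound collapse to $\MaxDist{S}{T}$. A slightly slicker alternative — valid when $S,T$ are finite, which is the case of interest — would be to observe that for fixed $y$ the map $x \mapsto \|x-y\|$ is convex and hence attains its maximum over $\ConvHull{S}$ at an extreme point, i.e. at some $s \in S$, and then argue symmetrically in the second coordinate; but the direct computation above works verbatim for arbitrary $S,T$ and any norm, so that is the version I would write up.
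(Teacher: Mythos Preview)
Your proof is correct, but it takes a different route from the paper's. The paper argues geometrically in two steps: writing $d = \MaxDist{S}{T}$, it first fixes $s\in S$ and observes that the closed ball $B_d[s]$ contains $T$; since norm balls are convex, $B_d[s]$ then contains $\ConvHull{T}$, giving $\|s-t\|\le d$ for all $t\in\ConvHull{T}$. A second pass, now fixing $t\in\ConvHull{T}$, uses the same ball-convexity trick to enlarge $S$ to $\ConvHull{S}$. Your argument instead unpacks both convex combinations at once and bounds $\|x-y\|$ by the doubly-indexed convex combination $\sum_{i,j}\lambda_i\mu_j\|s_i-t_j\|$ via the triangle inequality. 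Both work for arbitrary $S,T$ and any norm; the paper's version isolates the single structural fact being used (convexity of balls) and avoids writing out coefficients, while yours is a direct one-line computation that needs no auxiliary geometric statement. The ``extreme point'' alternative you sketch at the end is essentially the paper's idea phrased in terms of maxima of convex functions rather than containment of balls.
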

\begin{proof}
	Let $d = \MaxDist{S}{T}$. Since $S \subseteq \ConvHull{S}$ and $T \subseteq \ConvHull{T}$
	we clearly have
	$d \le \MaxDist{\ConvHull{S}}{\ConvHull{T}}$,
	the non-trivial part is to show that
	$d \ge \MaxDist{\ConvHull{S}}{\ConvHull{T}}$.

	We first show that
	\begin{equation}\label{eq4352}
	\forall s\in S, t \in \ConvHull{T} : \|s - t\| \le d
	~.
	\end{equation}
	Let $s \in S,t\in \ConvHull{T}$ and denote by $B_d[s]$ the closed
	ball of radius $d$ centered at $s$.
	Since $d \ge \|s'-t'\|$ for all $s'\in S,t'\in T$ it holds that
	\begin{align*}
		B_d[s]\kern4pt &\Wide\supseteq T ~, \\
	\intertext{and since balls induced by a norm are convex:}
		B_d[s] \Wide= \ConvHull{B_d[s]} &\Wide\supseteq \ConvHull{T}
		~,
	\end{align*}
	which implies $\|s - t\| \le d$, concluding the proof of \eqref{eq4352}.

	Finally we show that
	\[
		\forall s\in \ConvHull{S}, t \in \ConvHull{T} : \|s - t\| \le d
		~.
	\]
	Let $s\in \ConvHull{S}, t \in \ConvHull{T}$, from \eqref{eq4352} we know
	that $B_d[t] \supseteq S$, and since balls are convex we have that
	$B_d[t]  = \ConvHull{B_d[t]} \supseteq \ConvHull{S}$, which implies $\|s-t\| \le d$.
\end{proof}

Note that $\Diam{S} = \MaxDist{S}{S}$, hence \autoref{lem:maxdist-convex-hull} directly implies that
$
	\Diam{S} \Wide= \Diam{\ConvHull{S}}
$.

\begin{lemma}\label{lem:binary-bayes-capacity}
	Let $C : \secretspace\to\objectspace$ be a binary channel, with $\secretspace=\{s_1,s_2\}$.
	\begin{align*}
		\mulcapacity_\GEx(\Ch)
			&\Wide=
			\meleakage_\GEx(\Uni,\Ch)
			\Wide=
			1 + \frac{1}{2}\| C_{s_1} - C_{s_2} \|_1
			~, \\
		\mulcapacity_\LEx(\Ch)
			&\Wide=
			\meleakage_\LEx(\Uni,\Ch)
			\Wide=
			\frac{1}{1 - \frac{1}{2}\| C_{s_1} - C_{s_2} \|_1}
			~.
	\end{align*}
\end{lemma}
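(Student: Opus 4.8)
The plan is to reduce this to the two capacity results already stated for exact-guessing adversaries, namely \autoref{thm:mbc} and \autoref{thm:risk-mbc}, specialized to the case of a binary channel. First I would handle the multiplicative Bayes vulnerability. By \autoref{thm:mbc} we have $\mulcapacity_\GEx(\Ch) = \meleakage_\GEx(\Uni,\Ch) = \sum_o \max_s \Ch_{s,o}$, and since there are only two secrets $s_1,s_2$ this is $\sum_o \max\{\Ch_{s_1,o},\Ch_{s_2,o}\}$. The key elementary identity is $\max\{a,b\} = \tfrac12(a+b) + \tfrac12|a-b|$ for reals $a,b$. Summing this over $o$, the term $\tfrac12\sum_o(\Ch_{s_1,o}+\Ch_{s_2,o})$ equals $\tfrac12(1+1)=1$ because each row is a probability distribution, and $\tfrac12\sum_o|\Ch_{s_1,o}-\Ch_{s_2,o}| = \tfrac12\|\Ch_{s_1}-\Ch_{s_2}\|_1$ by definition of the $\lnorm1$ norm. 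This yields $\mulcapacity_\GEx(\Ch) = 1 + \tfrac12\|\Ch_{s_1}-\Ch_{s_2}\|_1$. Note also that since the uniform prior on two secrets is exactly $\Uni^{s_1,s_2}$, the two expressions $\meleakage_\GEx(\Uni,\Ch)$ and $\meleakage_\GEx(\Uni^{s_1,s_2},\Ch)$ coincide here, consistent with \autoref{thm:mbc}.

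For the risk (Bayes error) statement I would appeal to \autoref{thm:risk-mbc}, which gives $\mulcapacity_\LEx(\Ch) = \meleakage_\LEx(\Uni^{s,t},\Ch) = \frac{1}{1-\frac12\Diam{\Ch_\secretspace}}$ with $s,t$ realizing the diameter. When $\secretspace=\{s_1,s_2\}$ the only pair of rows is $\Ch_{s_1},\Ch_{s_2}$, so $\Diam{\Ch_\secretspace} = \|\Ch_{s_1}-\Ch_{s_2}\|_1$ and the maximizing prior $\Uni^{s,t}$ is just $\Uni$. Substituting gives $\mulcapacity_\LEx(\Ch) = \meleakage_\LEx(\Uni,\Ch) = \frac{1}{1-\frac12\|\Ch_{s_1}-\Ch_{s_2}\|_1}$ directly.

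Alternatively, and perhaps more self-containedly, the risk case can be derived from the vulnerability case via the relations $R_\GEx(\pi) = 1 - V_\GEx(\pi)$ and $R_\LEx(\pi,\Ch) = 1 - V_\GEx(\pi,\Ch)$ recorded in the preliminaries: for the uniform prior on two secrets, $V_\GEx(\Uni) = \tfrac12$ so $R_\GEx(\Uni) = \tfrac12$, while $V_\GEx(\Uni,\Ch) = \tfrac12\bigl(1+\tfrac12\|\Ch_{s_1}-\Ch_{s_2}\|_1\bigr)$ from the first part (dividing the earlier sum by $2$ for the uniform weights), so $R_\LEx(\Uni,\Ch) = \tfrac12 - \tfrac14\|\Ch_{s_1}-\Ch_{s_2}\|_1 = \tfrac12\bigl(1-\tfrac12\|\Ch_{s_1}-\Ch_{s_2}\|_1\bigr)$; taking the ratio $\meleakage_\LEx(\Uni,\Ch) = R_\GEx(\Uni)/R_\LEx(\Uni,\Ch)$ gives the claimed value. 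I do not expect any genuine obstacle here — the only thing to be slightly careful about is bookkeeping of the factor $\tfrac12$ coming from the uniform prior weights versus the factor $\tfrac12$ in the $\max$ identity, and confirming that the optimal prior in \autoref{thm:risk-mbc} really is the uniform one when there are only two secrets (which is immediate since there is only one candidate pair).
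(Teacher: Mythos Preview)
Your proposal is correct and follows essentially the same route as the paper: both apply \autoref{thm:risk-mbc} directly for $\mulcapacity_\LEx$, and for $\mulcapacity_\GEx$ the paper's manipulation via $\Sigma_\top+\Sigma_\bot=2$ and $\Sigma_\top-\Sigma_\bot=\|C_{s_1}-C_{s_2}\|_1$ is just the summed form of your pointwise identity $\max\{a,b\}=\tfrac12(a+b)+\tfrac12|a-b|$. Your alternative self-contained derivation of the $\LEx$ case via $R_\LEx=1-V_\GEx$ is a harmless extra.
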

\begin{proof}
	The result for $\mulcapacity_\LEx(\Ch)$ follows directly from \autoref{thm:risk-mbc}.
	Define
	\[
		\textstyle
		\Sigma_\top \Wide= \sum_o\max_s C_{s,o}
		~,
		\quad
		\Sigma_\bot \Wide= \sum_o\min_s C_{s,o}
		~.
	\]
	We obtain the following equalities:
	\begin{align*}
		\textstyle\Sigma_\top \kern26pt           &\Wide= \mulcapacity_\GEx(C) ~,
			& \text{(\autoref{thm:mbc})} \\
		\textstyle\Sigma_\top + \Sigma_\bot &\Wide= 2 ~,
			& \text{(Rows sum to $1$)}\\
		\textstyle\Sigma_\top - \Sigma_\bot &\Wide= \| C_{s_1} - C_{s_2} \|_1 ~.
			& \text{(Def. of $\|\cdot\|_1$)}
	\end{align*}
	Adding the last two and substituting in the first gives the required result.
\end{proof}

\ThmPGuessingCapacity*
\begin{proof}
	Starting from  $\mulcapacity_{\GP}(C)$, let $\pi\in\distset\secretspace$,
	define $\rho^\pi,\WtoS$ as in \eqref{eq:rho},\eqref{eq:B} and let
	\[
		A \Wide= \WtoS C	~.
	\]
	Note that $A : \{ P, \neg P \} \to \objectspace$ is a binary channel;
	its rows $A_P$ and $A_{\neg P}$
	express the behavior of an ``average''
	(wrt $\pi$) secret of $C$ among those in $P$ and $\neg P$ respectively.
	
	Note that $P$ represents a \emph{set} of secrets of $C:\secretspace\to\objectspace$,
	but a \emph{single} secret of $A : \{ P, \neg P \} \to \objectspace$.
	Hence, with a slight abuse of notation, $C_P$ denotes a \emph{set of rows} of $C$,
	while $A_P$ denotes a \emph{single row} of $A$.
	
	We have that

	\begin{Reason}
	\Step{}{
		\meleakage_{\GP}(\pi,C)
	}
	\StepR{$=$}{\autoref{lem:p-vs-bayes}}{
		\meleakage_\GEx(\rho^\pi,A)
	}
	\StepR{$\le$}{Def. of $\mulcapacity_\GEx$}{
		\mulcapacity_\GEx(A)
	}
	\StepR{$=$}{\autoref{lem:binary-bayes-capacity}, $A$ is binary}{
		1 + \frac{1}{2} \| A_P  -  A_{\neg P} \|_1
	}
	\end{Reason}
	Notice that since $\WtoS$ is a channel, the rows of $A$ are \emph{convex combinations} of those
	of $C$. More precisely, since $\WtoS_{w,s} = 0$ whenever $s\not\in w$, the rows
	$A_P$ and $A_{\neg P}$ are convex combinations of the sets of rows
	$C_P$ and $C_{\neg P}$ respectively.
	Continuing the previous equational reasoning:
	\begin{Reason}
	\Step{}{
		1 + \frac{1}{2} \| A_P  -  A_{\neg P} \|_1
	}
	\WideStepR{$\le$}{$A_P \in \ConvHull{C_P}, A_{\neg P} \in \ConvHull{C_{\neg P}}$}{
		1 + \frac{1}{2} \MaxDist{\ConvHull{C_P}}{\ConvHull{C_{\neg P}}}
	}
	\StepR{$=$}{\autoref{lem:maxdist-convex-hull}}{
		1 + \frac{1}{2} \MaxDist{C_P}{C_{\neg P}}
	}
	\WideStepR{$=$}{Let $s\in P,t\in\neg P$ be those realizing $\MaxDist{\Ch_{P}}{\Ch_{\neg P}}$}{
		1 + \frac{1}{2} \| C_s - C_t \|_1 ~.
	}
	\end{Reason}
	This holds for all $\pi$, hence
	$\mulcapacity_{\GP}(C) \le 1 + \frac{1}{2} \| C_s - C_t \|_1$.
	Taking $\pi = \Uni^{s,t}$ we get $\rho^\pi = \Uni$,
	$A_P = C_s$ and $A_{\neg P} = C_t$, hence
	\[ \meleakage_{\GP}(\Uni^{s,t},C) = 1 + \frac{1}{2} \| C_s - C_t \|_1 ~. \]
	So the upper bound of $\mulcapacity_{\GP}(C)$ is attained, concluding the proof.

	For $\mulcapacity_{\LP}(C)$ the proof is similar.
\end{proof}

\subsection{Proofs of Section~\ref{sec:unknown-prior}}

\CHOptimalForExact*
\begin{proof}
	Starting from $\LEx$, note that $\Diam{S} = \MaxDist{S}{S}$. Hence \autoref{lem:maxdist-convex-hull}
	directly implies that
	$
		\Diam{S} \Wide= \Diam{\ConvHull{S}}
	$,
	that is
	taking convex combinations does not affect the diameter of a set.
	As a consequence
	$\Diam{\Ch^{\OptQ}} = \Diam{C_{\neg s}}$.
	Then
	$\mulcapacity_\LEx(\Ch^{\OptQ}) = \mulcapacity_\LEx(C_{\neg s})$
	follows from
	\autoref{thm:risk-mbc}.

	Similarly, for $\GEx$ the result
	follows from
	\autoref{thm:mbc} and the fact that convex combinations do not
	affect the column maxima.
\end{proof}

Recall the notation $S^\bot, S^\top$ from \eqref{eq9853}. We also denote by
$\preccurlyeq$ the partial order on $\Reals^n$ defied as
$x \preccurlyeq y$ iff $x_i \le y_i$ for all $i \in 1..n$.

\begin{lemma}\label{lem:seb-maxima}
	Let $S \subseteq \distset\objectspace$ and let
	\[
		\OptQ \in \argmin_{q \in \distset{\objectspace}} \MaxDist{q}{Q}
	\]
	be a solution
	to the $(\distset{\objectspace},\lnorm{1})$-SEB problem for $S$. Then 
	\[
		S^\bot  \Wide\preccurlyeq \OptQ  \Wide\preccurlyeq S^\top
		~.
	\]
\end{lemma}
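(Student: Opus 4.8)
The plan is to prove the two inclusions $\OptQ \preccurlyeq S^\top$ and $S^\bot \preccurlyeq \OptQ$ separately, each by contradiction via a local mass‑shifting argument on the simplex; I will spell out the upper bound, since the lower bound is entirely symmetric. So suppose $\OptQ$ is an optimal center but $\OptQ_i > S^\top_i$ for some coordinate $i$, and write $M = \MaxDist{\OptQ}{S}$ for the optimal radius and $\delta = \OptQ_i - S^\top_i > 0$. The first thing to notice is that the \emph{naive} move — shave $\delta$ off coordinate $i$ and spread it arbitrarily over the others — only yields a \emph{weak} improvement: since $\OptQ_i > S^\top_i \ge x_i$ for every $x \in S$, the $i$‑th summand of $\| \OptQ - x \|_1$ drops by exactly $\delta$, while the other summands grow by at most $\delta$ in total (coordinatewise triangle inequality), so the new center is still within radius $M$. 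This shows that \emph{some} optimum lies in the box $[S^\bot, S^\top]$, but not that every one does; to actually contradict optimality of $\OptQ$ I need to squeeze out a \emph{strict} gain.

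The key combinatorial observation is that the sum‑to‑one constraint forces the presence of a coordinate ``with room below''. Let $X^\star = \{\, x \in S : \| \OptQ - x \|_1 = M \,\}$ be the (finite, nonempty) set of worst‑case points. For each $x^\star \in X^\star$, since $\sum_o \OptQ_o = \sum_o x^\star_o = 1$ while the $i$‑th difference satisfies $\OptQ_i - x^\star_i \ge \delta > 0$, the remaining differences must sum to $-\delta < 0$, hence there is a coordinate $k(x^\star) \neq i$ with $\OptQ_{k(x^\star)} < x^\star_{k(x^\star)}$. I would then set $K = \{\, k(x^\star) : x^\star \in X^\star \,\}$ and define, for a small parameter $t > 0$, the perturbed point $q^{(t)}$ obtained from $\OptQ$ by subtracting $t$ from coordinate $i$ and adding $t/|K|$ to each coordinate in $K$. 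For $t$ small enough this stays in $\distset\objectspace$ (entries remain nonnegative, total is preserved).

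Next I would estimate distances. For $x^\star \in X^\star$ and $t$ small enough that the relevant strict inequalities $\OptQ_i - t > x^\star_i$ and $\OptQ_{k(x^\star)} + t/|K| < x^\star_{k(x^\star)}$ are preserved (only finitely many conditions), the $i$‑th summand of $\| q^{(t)} - x^\star \|_1$ decreases by $t$, the $k(x^\star)$‑th summand decreases by $t/|K|$, and the remaining $|K|-1$ summands increase by at most $t/|K|$ each; summing, $\| q^{(t)} - x^\star \|_1 \le M - 2t/|K| < M$. For $x \in S \setminus X^\star$, the triangle inequality gives $\| q^{(t)} - x \|_1 \le \| q^{(t)} - \OptQ \|_1 + \| \OptQ - x \|_1 = 2t + \| \OptQ - x \|_1$, which is still $< M$ once $t$ is small, since $\| \OptQ - x \|_1 < M$ strictly and $S$ is finite. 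Taking the maximum over $x \in S$ gives $\MaxDist{q^{(t)}}{S} < M$, contradicting the optimality of $\OptQ$. Hence $\OptQ_i \le S^\top_i$ for every $i$, i.e. $\OptQ \preccurlyeq S^\top$; the bound $S^\bot \preccurlyeq \OptQ$ follows by the same argument with maxima and minima (and adding versus removing mass) swapped.

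The part I expect to be the real obstacle is exactly this upgrade from the weak to the strict inequality: one must track the whole maximizer set $X^\star$ at once and use the constraint $\sum_o \OptQ_o = 1$ to locate, for \emph{each} worst‑case point, a coordinate where the center is strictly below (resp. above) the target, so that pushing mass there strictly helps. The bookkeeping needed to keep $q^{(t)}$ a genuine probability distribution is precisely what makes the statement nontrivial and what separates it from the corresponding $(\Reals^m, \lnorm{1})$‑SEB fact, where no such constraint is present.
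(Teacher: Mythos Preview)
Your argument is correct, but it takes a longer route than the paper's. Both proofs proceed by contradiction via a mass--shifting perturbation on the simplex; the difference is in how the mass is redistributed. The paper simply subtracts a small $\epsilon$ from the offending coordinate $o_1$ and spreads it \emph{uniformly} over all remaining $|\objectspace|-1$ coordinates, then shows that this strictly decreases $\|\,\cdot - x\,\|_1$ for \emph{every} $x\in S$ at once. The reason is exactly the observation you made for worst--case points, but it applies to all points: since $\OptQ[o_1] > x_{o_1}$ and both vectors sum to~$1$, every $x$ has some $o_2$ with $\OptQ[o_2] < x_{o_2}$; on that coordinate the uniform spread strictly helps, on $o_1$ one gains $\epsilon$, and on the rest one loses at most $\epsilon/(|\objectspace|-1)$ each, so the net change is strictly negative. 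Thus your claim that the ``naive'' uniform move only yields a weak improvement is too pessimistic---it already yields a strict one---and consequently there is no need to isolate the maximizer set $X^\star$, choose a tailored set $K$, or handle non--worst--case points via a separate slack argument. Your construction works, but the paper's uniform perturbation buys a shorter and cleaner proof with less bookkeeping.
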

\begin{proof}
	We show that $\OptQ  \Wide\preccurlyeq S^\top$, the proof of
	$S^\bot  \Wide\preccurlyeq \OptQ$ is similar.
	Assume that 
	$\OptQ[o_1] > S^\top_{o_1}$ for some $o_1$,
	that is $\OptQ[o_1]  > x_{o_1}$ for all $x\in S$.
	Select some
	\[
		0 \Wide< \epsilon
		\Wide<
		\min_{\substack{o\in\objectspace,x\in S \\ \OptQ[o]  \neq x_{o}}} |\OptQ[o] - x_{o}|
	\]
	and define
	$q \in \distset{\objectspace}$ as
	\[
		q_o
		\Wide=
		\begin{cases}
			\OptQ[o] - \epsilon
				& o = o_1 \\
			\OptQ[o] + \frac{\epsilon}{|\objectspace|-1}
				& \text{otherwise}
		\end{cases}
		~.
	\]
	In the following, we show that this construction moves $\OptQ$ \emph{strictly closer} to
	\emph{all} elements $x\in S$ simultaneously.

	Fix some arbitrary $x\in S$; the choice of $\epsilon$ is such that the
	relative order of $\OptQ[o]$ and $x_o$ is not affected by adding or subtracting $\epsilon$.
	Since $\OptQ[o_1] >  x_{o_1}$
	it follows that $ q_{o_1} > x_{o_1}$ which it turn implies that
	in the $o_1$ component we moved $\OptQ$ closer to $x$
	by exactly $\epsilon$:
	\begin{align}
		|q_{o_1} - x_{o_1}| &\Wide= |\OptQ[o_1] - x_{o_1}| - \epsilon
			~.\label{eq2622}\\
	\intertext{
		From the triangle inequality we get that
		in all other components, we moved $\OptQ$ away from $x$ 
		by at most $\frac{\epsilon}{|\objectspace|-1}$:
	}
		|q_o - x_o| &\Wide\le |\OptQ[o] - x_o| + \textstyle\frac{\epsilon}{|\objectspace|-1}
			~,
			& \forall o &\neq o_1 ~.
		\label{eq1486}\\
	\intertext{
		Moreover, since both vectors sum to 1 and
		$\OptQ[o_1] > x_{o_1}$ there must by some $o_2$ such that
		$\OptQ[o_2] < x_{o_2}$.
		By the choice of $\epsilon$ we get that
		$q_{o_2} < x_{o_2}$, hence:
	}
		|q_{o_2} - x_{o_2}| &\Wide< |\OptQ[o_2] - x_{o_2}| + \textstyle\frac{\epsilon}{|\objectspace|-1}
			~.\label{eq7553}
	\end{align}
	Summing over all $o$ using \eqref{eq2622}, \eqref{eq1486} and \eqref{eq7553} we get that
	\[
		\|q - x \|_1 \Wide< \|\OptQ - x\|_1 ~.
	\]
	Since this happens for all $x\in S$, it contradicts the fact that $\OptQ$ is a solution
	to the SEB problem, concluding the proof.
\end{proof}

\Minball*
\begin{proof}
	The minimization of $\mulcapacity_{\GS},\mulcapacity_{\LS}$ is a direct
	consequence of 	\autoref{thm:p-guessing-capacity}, since both capacities
	are increasing functions of $\MaxDist{C^q_s}{C^q_{\neg s}} = \MaxDist{q}{C_{\neg s}}$.
	
	For $\mulcapacity_{\LEx}$, let $t \neq s$. Since $q = \OptQ$ is a better
	choice than $q = C_t$
	(and $q = C_t$ is feasible since $\feasible = \distset\objectspace$)
	we have that:
	\begin{equation}\label{eq3623}
		\MaxDist{\OptQ}{C_{\neg s}}
		\Wide\le \MaxDist{C_t}{C_{\neg s}}
		\Wide\le \Diam{C_{\neg s}}
		~.
	\end{equation}
	As a consequence, for any $q \in \distset{\objectspace}$ it holds that
	\begin{Reason}
		\Step{}{
			\Diam{\channel^{\OptQ}_\secretspace}
		}
		\Step{$=$}{
			\max\{ \Diam{C_{\neg s}}, \MaxDist{\OptQ}{C_{\neg s}} \}
		}
		\Step{$=$}{
			\Diam{C_{\neg s}}
		}
		\Step{$\le$}{
			\Diam{\Ch^q_\secretspace}
			~.
		}
	\end{Reason}
	The result follows from \autoref{thm:risk-mbc}, since 
	$\mulcapacity_{\LEx}$ is an increasing function of $\Diam{C_\secretspace}$.

	The last case $\mulcapacity_{\GEx}$. From
	\autoref{lem:seb-maxima} 
	(which is applicable since $\feasible = \distset\objectspace$)
	we get that the elements of $\OptQ$ cannot
	be greater than the column maxima of $C_{\neg s}$.
	As a consequence,
	$C^{\OptQ}$ and $C_{\neg s}$ have exactly the same column maxima,
	which from \autoref{thm:mbc} implies that for any $q$:
	\[
		\mulcapacity_{\GEx}(C^{\OptQ})
		\Wide= \mulcapacity_{\GEx}(C_{\neg s})
 		\Wide\le \mulcapacity_{\GEx}(C^q)
		~.
	\]
	The last inequality comes from the fact that adding a row can only increase
	the column maxima.
\end{proof}

\subsection{System Specifications used in the experiments} \label{appendix:sys_spec}
The system specifications are crucial for determining the runtime of \emph{SEB exact} and \emph{SEB approx.} and the scalability of the proposed solutions. 

We implemented the experiments in Python 3.6.9 using the qif library and run them in the following system: 

\begin{itemize}
    \item CPU: 2 Quad core Intel Xeon E5-2623
    \item GPU:  NVIDIA GP102 GeForce GTX 1080 Ti
    \item RAM: 16x 16GB DDR4
\end{itemize}

\subsection{Selected Sites} \label{appendix:selected_sites}
In our experiments, out of the 200 sites we selected the 20 closest 
(in total variation distance) to the one we defend. The list is shown
\Cref{table:sites}.

\begin{table}[h]
    \centering
     \caption{Monthly Visits (in millions) of the selected sites (source: similarweb.com)}
    \begin{tabular}{|l|l|}
        \hline
        \textbf{Site Name} & \textbf{Visits} \\
        \hline
          \rowcolor{gray!30}
        mediafax.ro  ($s$) &  1.145 \\
        wort.lu & 0.177 \\
        sapo.pt & 13.6 \\
        sabah.com.tr & 61.4 \\
        lavanguardia.com & 41.5 \\
        e24.no & 0.8 \\
        cbc.na & 17.4 \\
        news.com.au & 8.3 \\
        cnnbrasil.com.br & 32.3 \\
        slobodnadalmacija.hr & 1.3 \\
        primicia.com.ve & 7.6 \\
        sme.sk & 7.4 \\
        topky.sk & 4.8 \\
        oe24.at & 1.2 \\
        ilfattoquotidiano.it & 12.7 \\
        meinbezirk.at & 2.6 \\
        voxeurop.eu & 0.019 \\
        the-european-times.com & 0.004 \\
        index.hu & 9.09 \\
        hespress.com & 5.4 \\
        \hline
    \end{tabular}
   
    \label{table:sites}
\end{table}

\end{document}